\definecolor{link_red}{rgb}{0.7,0,0}
\definecolor{cite_blue}{rgb}{0,0,0.97}
\theoremstyle{plain}
\newtheorem{theorem}{Theorem}[section]
\newtheorem{lemma}[theorem]{Lemma}
\newcommand{\reals}{\mathbb{R}}
\theoremstyle{definition}
\newtheorem{definition}[theorem]{Definition}
\newtheorem{remark}[theorem]{Remark}
\newtheorem{example}[theorem]{Example}
\begin{document}

\title{The Evolutionary stability of  partial migration with Allee effects}

\author{Yogesh Trivedi}
\address{Department of Mathematics\\
BITS Pilani, K.K Birla Goa Campus, 403726, Goa, India}
\urladdr[Anushaya Mohapatra]{}

\author{Ram Singh}
\address{Department of Mathematics\\
BITS Pilani, K.K Birla Goa Campus, 403726, Goa, India}
\urladdr[Anushaya Mohapatra]{}

\author{Anushaya Mohapatra}
\address{Department of Mathematics\\
BITS Pilani, K.K Birla Goa Campus, 403726, Goa, India}
\urladdr[Anushaya Mohapatra]{}

\keywords{Partial Migration, Allee Effects, Basic Reproduction Number, Ideal Free Distribution, Evolutionary Game Theory, Evolutionary Stable Strategy}

\subjclass[2000]{37}

\date{\today}

\begin{abstract}
An  Allee effect occurs when the per-capita growth rate increases at low densities.  Here, we investigate the evolutionary stability of a partial migration population with migrant population experiencing Allee effects.  Partial migration is a unique form of phenotypic diversity wherein migrant and non-migrant individuals coexist together. It is shown that when Allee effect is incorporated, the population undergoes a bifurcation as  the fraction of migrating population increases from zero to unity. Using an evolutionary game theoretic approach, we prove the existence of a unique evolutionary stable  strategy (ESS). It is also shown that the ESS  is the only ideal free distribution (IFD) that arises in the context of partially migrating population.

\end{abstract}

\maketitle

\section{Introduction}
Allee effect is a phenomenon, in which individual fitness increases with increasing density at low densities.  Many research has shown the  evidence of Allee effects and the importance of this biological phenomenon has been widely recognized in population dynamics, conservation programs, management of  endangered species and ecosystem dynamics \cite{boukal2002single,courchamp1999inverse,cushing2014backward,cushing2012evolutionary,edelstein2005mathematical,courchamp2008allee}. Here we study the evolution of partial migration phenomenon when only migrants experience Allee effects. Migration is a diverse phenomenon, and can be categorized into a multitude of forms. The most common type of migration is known as partial migration in which some individual migrate between habitats and others remain in a single habitat during their entire life.  Originally, these studies were motivated by birds, by now, partial migration has been found across many taxa, including fish, invertebrates and mammals \cite{chapman2011ecology,lande1976natural,lande1982quantitative,lundberg1988evolution,lundberg2013evolutionary}.  This type of within-population diversity is thought to play an important role in population stability and resilience \cite{schindler2010population}, so understanding how it is maintained by natural selection is critical for predicting how species may respond to future conditions.

There has been lot of interest in understanding such a complex system in which population consists of a mix of migratory and non-migratory individuals. Several mechanisms are studied in association with partial migration which  include genetic control, density-dependence, and exogenous stochastic effects in environmental variables \cite{chapman2011ecology,kokko2007modelling,lundberg1988evolution,ohms2019evolutionary,perez2014males}. 
In general, partial migration population models with negative density dependence effects are studied and neglecting the populations that experience Allee effects \cite{kokko2007modelling,de2017puzzle,mohapatra2016population,ohms2019evolutionary}. One of the most common examples of Allee effects occur when a  species is subject to predation with a saturating functional response, meaning that increased population levels decrease the risk of predation. Motivated by the partially migrating fish population steelhead rainbow trout system, in which steelhead experiences predation during its ocean phase, but trout does not, and only experiences the usual negative density dependence, we study a partial migration population model where the critical component is incorporation of an Allee effect. The primary goal of this work is to investigate what will happen when the biological mechanism Allee effects is included. Is partial migration preserved? Or can it be lost? How robust is it when the underlying population model is modified to incorporate the neglected biological feature?\\

We  investigate the evolutionary stability of the partial migration population with strong Allee effect (migrants) using evolutionary game theoretic or Darwin's dynamics approach. This is an alternative approach to Adaptive dynamics and is based  on Evolutionary Game Theory as given in \cite{lande1976natural,lande1982quantitative,allen2013adaptive,vincent2005evolutionary}.  
We also explore the connections between Ideal Free Distributions (IFD) arising from strategies that lead to partial migration behavior and evolutionary stable startegies.  It turns out  that the strategy which is the fraction of population became migrant act as a bifurcation parameter for the model. More ever it is shown that under rather general conditions the only possible strategies corresponding to an IFD, are  evolutionary stable, as well as convergent stable.

The rest of this paper is structured as follows. In Section 2 we formulate the population model and state the stability results. In Section 3 we derive an explicit formula for the ESS using evolutionary game approach. In Section 4 we discuss the results that connect IFDs and ESSs, and we conclude in Section 5 with possible future work and a brief non-mathematical discussion of the results obtained in this research.

\section{The population model with strong Allee effect}

 Consider the following stage structured, density-dependent and matrix population model.\\
\begin{equation}\label{sys-1}
\begin{pmatrix}
x_1(t+1)\\
x_M(t+1)\\
x_N(t+1)
\end{pmatrix}=
\begin{pmatrix}
0& f_M(z_M(t))& f_N(z_N(t))\\
\phi s_M& 0 & 0\\
(1-\phi)s_N& 0& 0
\end{pmatrix}
\begin{pmatrix}
x_1(t)\\
x_M(t)\\
x_N(t)
\end{pmatrix},
\end{equation}
where $x_1(t)$, $x_M(t)$ and $x_N(t)$ are non-negative real numbers, respectively representing the abundances of eggs, migrant adults and non-migrant adults at time $t$, where $t$ is a non-negative integer. A fraction $\phi\in[0,1]$ of eggs at time $t$ will become migrant adults, provided they survive a season, which is captured by the survival probability $s_M\in(0,1]$ in the model. Similarly, a fraction $1-\phi$ of eggs will become non-migrant adults, after surviving a season, with survival probability $s_N\in (0,1]$. The functions  $f_M(z)$ and $f_N(z)$ are per capita fertilities of migrants and non-migrants respectively. The parameter $\phi$ represents an allocation strategy whereby each morph (migrant or non-migrant) produces offspring that can become either type of morphs. 
$z_i$, with $i =M, N$ represents the total number of competing individuals experienced by phenotype $i$ during reproduction,
and it is given by 
\begin{equation} \label{zM}
z_M(t)= x_M(t) + p   x_N(t)
\end{equation} and 
\begin{equation} \label{zN}
z_N(t)= x_N(t) + q  x_M(t)
\end{equation} 
with $0 < p, q  < 1$. Here, $p$ is a parameter representing the fraction of the non-migrant population competing with each migrant adult. Similarly $q$ represents the fraction of the migrant population that competes with each non-migrant adult.

We now describe the  assumptions behind our general results:
\begin{enumerate}[leftmargin=*, topsep=6pt, itemsep=4pt, label={(\textbf{H\arabic*})}, ref=H\arabic*]

\item 
\label{Allee-type}

(\textbf{Migrant Allee effect}) $f_M :[0,\infty) \to [0,\infty)$  is a  smooth and unimodal function.  Namely there is an  unique positive population density $C_0$  such that  $f^{'}_M(C_0)=0$ and the maximum value of the function is $f_M(C_0)$. 
 $f_M$ increases at low densities i.e $f_M^{'}(z) >0$ for z sufficiently small. There is a positive equilibrium density A such that $f_M(z) < 1$ for  all $z <A$ and $f_M(z) > 1$ for some $z >A$. More ever we choose $s_Mf_M(C_0) > 1$ and  $g_M(z)=z f_M(z)$ has positive derivative.

\begin{remark}
In the case of $s_Mf_M(C_0) < 1$, the migrant only population model has only zero equilibrium point that may lead to extinction. So the assumption  $s_Mf_M(C_0) > 1$ guarantee the existence of non-zero or positive fixed points for migrant only population. The solution set of the equation $s_Mf_M(z)=1$  is non-empty, more ever it has precisely two element say $\hat x_M$ and $\tilde x_M$ with $ 0<\tilde x_M < C_0< \hat x_M$,  where $\hat x_M$ is referred as the carrying capacity of the migrant population.
\end{remark}
\begin{example}
	The most common Allee effect occurs in species subject to predation by a generalist predator with a saturating functional response.  Migrant population can be  modeled by the following equation
	
	\begin{equation}\label{$f_M(z)$}
	f_M(z)= \frac{az}{(b+z)^{2}} 
	\end{equation}
		
If we choose $a=5$ and $b=1$ then the graph of the function $f_M(z)$ is as shown in the following figure.

\begin{figure}[h]
	\begin{center}
\includegraphics[scale=0.70]{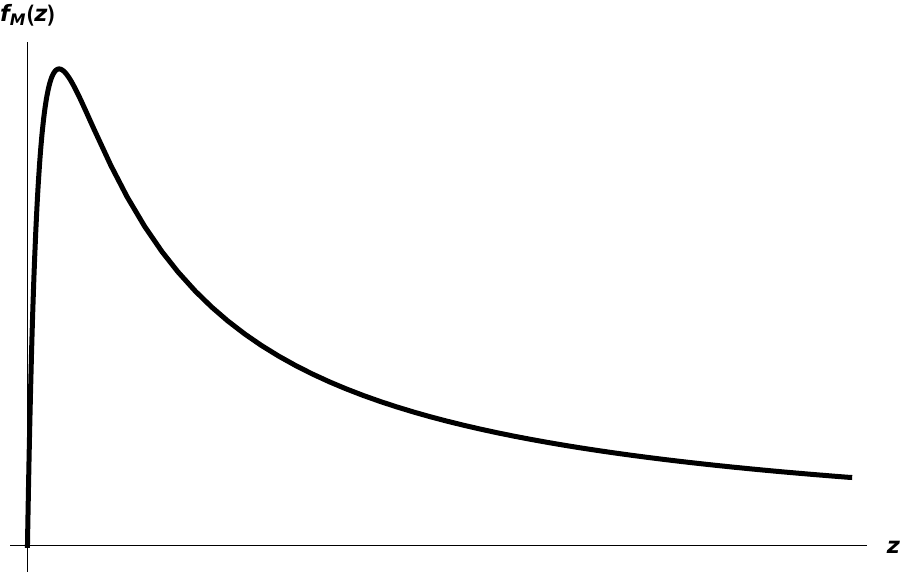}
\caption{The function $f_M$}
\label{fish}
\end{center}
	\end{figure}
\end{example}
\item
\label{monotone-sys}
 (\textbf{Decreasing fertility }) $f_N :[0,\infty) \to (0,\infty)$ is a smooth function, with negative derivatives and $g_N(z):=f_N(z)z$   has positive derivative. More ever $s_N f_N(0) >1$. Biologically, reproduction decrease with increasing population density.

\begin{remark}
	The condition  $s_N f_N(0) >1$ imply that when only non-migrant morph is present, the population persists. The equation $s_N f_N(z) =1$ has a unique solution $\tilde x_N > 0$ and referred as the carrying capacity of the non-migrant population.
	\end{remark}

\subsection{Stability Analysis}
\label{pd}

Model $(\ref{sys-1})$ can be re-written more compactly in vector form as 
\begin{equation}\label{sys-eq}
X(t+1)=A_1(X(t),\phi)X(t),
\end{equation}
where
$$
X=\begin{pmatrix}x_1\\x_M \\x_N\end{pmatrix},
A_1(X,\phi)=\begin{pmatrix}
0& f_M(z_M)& f_N(z_N)\\
\phi s_M& 0 & 0\\
(1-\phi)s_N& 0 & 0
\end{pmatrix},
$$ 
 $z_M(t)= x_M(t) + p   x_N(t)$ and $z_N(t)= x_N(t) + q   x_M(t).$
By splitting $A_1(X,\phi)$ as:
$$
A_1(X,\phi)=F+T,\textrm{ where } F=\begin{pmatrix}
0&f_M(z_M)&f_N(z_N)\\ 0 & 0 & 0\\0 & 0 & 0
\end{pmatrix},\textrm{ and } T=\begin{pmatrix}
0&0&0\\
\phi s_M& 0 & 0\\
(1-\phi) s_N &0&0
\end{pmatrix},
$$
allen2008basic,caswell2000matrix,cushing1998introduction,li2002applications
we can associate 
the basic reproduction number to the non-negative matrix $A_1(X,\phi)$ in the usual way \cite{allen2008basic,caswell2000matrix,cushing1998introduction,li2002applications}:
\begin{equation}
\label{R0:def}
R_0(X,\phi):=\rho(F(I-T)^{-1})=\phi R_0^M(z_M)+(1-\phi)  R_0^N(z_N)   \textrm{ for every } (X,\phi)\in \reals^3_+ \times [0,1].
\end{equation}
Here, $\rho(F(I-T)^{-1})$ denotes the spectral radius of $F(I-T)^{-1}$ and $ R_0^I(z_I) = s_I f_I(z_I)$ for $I= M, N.$\\

\begin{lemma}
	\label{origin-stability}
Suppose $\phi_1 = \frac{1-s_Nf_N(0)}{s_Mf_M(0)- s_Nf_N(0)}$, then for $\phi \in [0, \phi_1), R_0(0, \phi) > 1$, and for $\phi \in (\phi_1, 1], R_0(0, \phi)< 1$.
	\end{lemma}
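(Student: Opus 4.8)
The plan is to reduce the claim to an elementary statement about an affine function of $\phi$. First I would evaluate the basic reproduction number at the origin. Since $z_M = x_M + p x_N$ and $z_N = x_N + q x_M$ both vanish when $X = 0$, the formula \eqref{R0:def} specializes to
\begin{equation*}
R_0(0,\phi) = \phi\, s_M f_M(0) + (1-\phi)\, s_N f_N(0).
\end{equation*}
Writing $a := s_M f_M(0)$ and $b := s_N f_N(0)$, this is simply the affine function $R_0(0,\phi) = b + \phi(a - b)$ on $[0,1]$.

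The key step is to pin down the signs of $a - 1$ and $b - 1$ from the hypotheses, since these control both the slope and the endpoint values. Hypothesis \eqref{monotone-sys} gives $b = s_N f_N(0) > 1$ directly. For the other endpoint, hypothesis \eqref{Allee-type} asserts $f_M(z) < 1$ for all $z < A$; because $0 < A$, this yields $f_M(0) < 1$, and combined with $s_M \in (0,1]$ we obtain $a = s_M f_M(0) < 1$. Consequently $a - b < 0$, so $R_0(0,\cdot)$ is strictly decreasing, with $R_0(0,0) = b > 1$ and $R_0(0,1) = a < 1$.

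It then remains to locate the unique crossing point and invoke monotonicity. Solving $b + \phi(a-b) = 1$ gives $\phi = (1-b)/(a-b)$, which—after multiplying numerator and denominator by $-1$—is exactly the value $\phi_1$ in the statement; the chain $0 < a < 1 < b$ simultaneously guarantees $\phi_1 \in (0,1)$. Since the affine map is strictly decreasing and equals $1$ precisely at $\phi_1$, one concludes $R_0(0,\phi) > 1$ for $\phi \in [0,\phi_1)$ and $R_0(0,\phi) < 1$ for $\phi \in (\phi_1,1]$. I do not expect any genuine analytic obstacle here; the only point demanding care is the justification of $s_M f_M(0) < 1$, which is not assumed outright but must be extracted from the strong Allee structure encoded in \eqref{Allee-type}.
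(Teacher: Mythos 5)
Your proof is correct and follows essentially the same route as the paper's: evaluate $R_0(0,\phi)$ as an affine function of $\phi$, observe the slope $s_M f_M(0) - s_N f_N(0)$ is negative because $s_M f_M(0) < 1 < s_N f_N(0)$, and locate the unique crossing at $\phi_1$. Your only addition is to spell out why $s_M f_M(0) < 1$ follows from the Allee hypothesis (\ref{Allee-type}) via $f_M(z)<1$ for $z<A$ and $s_M\le 1$, a detail the paper asserts without elaboration.
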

\begin{proof}
We have $R_0(0, \phi)= \phi s_Mf_M(0) + (1-\phi)s_Nf_N(0)$. The derivative of $R_0(0, \phi)$ w.r.t $\phi$ is $s_Mf_M(0)- s_Nf_N(0)$ which is negative as $s_Mf_M(0) <1 < s_Nf_N(0)$ by hypothesis (\textbf{H1}) and  (\textbf{H2}). So $R_0(0, \phi)$ is a decreasing function of $\phi$.
More ever we have  $R_0(0, \phi) = 1 $ iff $\phi = \phi_1$,  where  $\phi_1 = \frac{1-s_Nf_N(0)}{s_Mf_M(0)- s_Nf_N(0)}.$ Hence for $\phi \in [0, \phi_1),  R_0(0, \phi) > 1 $  and  for  $\phi \in (\phi_1, 1],  R_0(0, \phi) < 1.$
\end{proof}
 \begin{lemma}
 	\label{L1}
 	Suppose the functions $f_M$ and $f_N$ satisfies the hypothesis  $({\bf \ref{Allee-type}})-({\bf H2})$ and suppose that $s_Nf_N(\frac{C_0}{p})>1, qs_M < s_N$ and  $ps_N < s_M$ . Define  $Q_{\phi}(x):=\phi s_M f_M(d_1x) + (1-\phi) s_N f_N(d_2x)$ where  $d_1(\phi) = \phi s_M + p (1-\phi) s_N$ and  $d_2(\phi) = q \phi s_M +  (1-\phi) s_N$.  Exactly one solution ${\bar x_1}(\phi)$ of the equation $Q_{\phi}(x)=1$  will lie in the interval $ (\frac{C_0}{d_1(\phi)}, \infty)$, where $f_M^{'}(C_0)=0$.
 \end{lemma}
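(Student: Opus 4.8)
The plan is to analyze the function $Q_\phi$ directly on the interval $I_\phi := (C_0/d_1(\phi), \infty)$, showing that it is continuous, strictly decreasing, exceeds $1$ at the left endpoint, and drops below $1$ for large $x$; existence and uniqueness of the root then follow at once from the Intermediate Value Theorem together with strict monotonicity. First I would note that $d_1(\phi), d_2(\phi) > 0$ for every $\phi \in [0,1]$, so $Q_\phi$ is well defined and smooth. The structural point is that for $x \in I_\phi$ we have $d_1 x > C_0$, so the argument $d_1 x$ lies in the decreasing branch of the unimodal map $f_M$ (where $f_M' < 0$ by (\ref{Allee-type})). Hence $x \mapsto f_M(d_1 x)$ is strictly decreasing on $I_\phi$, while $x \mapsto f_N(d_2 x)$ is strictly decreasing on all of $[0,\infty)$ by (\textbf{H2}). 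Therefore $Q_\phi$ is strictly decreasing on $I_\phi$; the degenerate cases $\phi = 0$ and $\phi = 1$, where one summand vanishes, are handled identically.

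Next I would evaluate $Q_\phi$ at the left endpoint $x = C_0/d_1$. There the migrant term equals $\phi\, s_M f_M(C_0)$, which exceeds $\phi$ since $s_M f_M(C_0) > 1$ by (\ref{Allee-type}). For the non-migrant term I would establish the comparison $d_2 (C_0/d_1) \le C_0/p$. This is equivalent to $p\, d_2 \le d_1$, which after substituting $d_1 = \phi s_M + p(1-\phi)s_N$ and $d_2 = q\phi s_M + (1-\phi)s_N$ reduces to $pq\,\phi s_M \le \phi s_M$, i.e. to $pq \le 1$, automatic from $0 < p, q < 1$. Since $f_N$ is decreasing, the comparison gives $f_N(d_2 C_0/d_1) \ge f_N(C_0/p)$, and invoking the hypothesis $s_N f_N(C_0/p) > 1$ shows the non-migrant term exceeds $1-\phi$. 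Adding the two bounds yields $Q_\phi(C_0/d_1) > \phi + (1-\phi) = 1$.

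To finish I would produce a point of $I_\phi$ where $Q_\phi < 1$. By the Remark following (\ref{Allee-type}) we have $s_M f_M(z) < 1$ for $z > \hat x_M$, and by the Remark following (\textbf{H2}) we have $s_N f_N(z) < 1$ for $z > \tilde x_N$. Choosing any $x_0$ large enough that $d_1 x_0 > \hat x_M$ and $d_2 x_0 > \tilde x_N$ forces both summands below their respective thresholds, so $Q_\phi(x_0) < \phi + (1-\phi) = 1$. Since $Q_\phi$ is continuous with $Q_\phi(C_0/d_1) > 1 > Q_\phi(x_0)$, the Intermediate Value Theorem produces a root $\bar x_1(\phi) \in I_\phi$, and strict monotonicity makes it unique.

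The main obstacle is the left-endpoint inequality $Q_\phi(C_0/d_1) > 1$, and in particular controlling the non-migrant summand, whose argument $d_2 C_0/d_1$ depends on $\phi$ in a way that is not obviously bounded. The crux is recognizing that $pq \le 1$ places this argument uniformly below $C_0/p$, so that the single hypothesis $s_N f_N(C_0/p) > 1$ does the work across all $\phi$. The remaining assumptions $q s_M < s_N$ and $p s_N < s_M$ are what make $d_1$ increasing and $d_2$ decreasing in $\phi$, a monotone dependence used in the surrounding development (tracking $\bar x_1(\phi)$) rather than in this single-$\phi$ existence argument; once the endpoint bound is in place, everything else is routine monotonicity and continuity bookkeeping.
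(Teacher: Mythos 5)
Your proof is correct and follows essentially the same route as the paper's: both rest on the endpoint inequality $Q_\phi(C_0/d_1)>1$ (the paper phrases it as $F^M_\phi < F^N_\phi$ at $C_0/d_1$), strict monotonicity of $Q_\phi$ to the right of $C_0/d_1$, and a sign change at large $x$. Your version is in fact more complete in two spots the paper hand-waves: you supply the computation behind the paper's ``easy to check'' endpoint bound, correctly observing that it needs only $pq<1$ together with $s_Mf_M(C_0)>1$ and $s_Nf_N(C_0/p)>1$ (not $qs_M<s_N$ or $ps_N<s_M$, which matter only for the monotone dependence of $d_1,d_2$ on $\phi$ used elsewhere), and you make explicit, via the carrying capacities $\hat x_M$ and $\tilde x_N$, that $Q_\phi$ eventually falls below $1$, which the paper leaves to the figure.
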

 	\begin{proof}

Denote  $F_{\phi}^{M}(x) = 1 -\phi s_M f_M(d_1x)$ and $F_{\phi}^{N}(x) = (1 -\phi)s_N f_N(d_2x)$, then we have $$Q_{\phi}(x)-1 = F_{\phi}^{N}(x) - F_{\phi}^{M}(x)$$
Notice that $\frac{d}{dx}(F^M_{\phi})=0$ at $\frac{C_0}{d_1}$ and $F_{\phi}^{M}$ is a unimodal function for each $\phi \in (0, 1)$ as shown in the figure (\ref{R0g1}). Also $F_{\phi}^{N}$ is a strictly decreasing function in x.  So if $F_{\phi}^M( \frac{C_0}{d_1}) < F_{\phi}^N({\frac{C_0}{d_1}})$, then the  graphs of $F_{\phi}^M$ and $F_{\phi}^N$ will intersect at least at one point in the first quadrant. It is easy to check that the assumptions $s_N f_N(\frac{C_0}{p})>1, qs_M < s_N$, $ps_N < s_M$ and $s_Mf_M(C_0)>1 $ together implies  $F_{\phi}^M(\frac{C_0}{d_1}) < F_{\phi}^N ({\frac{C_0}{d_1}})$ . So the equation $Q_{\phi}(x) =1 $  has at least one solution in positive real line. Since $F^N_{\phi}$ is strictly decreasing and $F^M_{\phi}$ is strictly increasing on $[\frac{C_0}{d_1}, \infty)$, the equation $Q_{\phi}(x)=1 $ has exactly one solution $\bar x_1(\phi)$  in $(\frac{C_0}{d_1}, \infty)$ (see figure \ref{R0g1}.) 
\end{proof}

\begin{remark}
Lemma (\ref{L1}) says that the number of positive solution of the equation $Q_{\phi}(x) =1$  in $( \frac{C_0}{d_1(\phi)},\infty)$ is one, however there may be one or more then one positive solution of the equation $Q_{\phi}(x) =1$ in the interval $[0, \frac{C_0}{d_1})$. For simplification purpose we only focus on a class functions $f_N$, $f_M$ and other suitable parameters such that the equation  $Q_{\phi}(x) =1$ has at most one solution in $[0, \frac{C_0}{d_1})$ for all $\phi \in (0, 1)$ as shown in the figure (\ref{R0g1}) and also the main results in this work remain same even if  the equation $Q_{\phi}(x) =1$ has one or more solutions in  $[0, \frac{C_0}{d_1})$.

\end{remark}

\begin{figure} [h]

  \centering
  \subfloat[$\phi = \phi_1$]{\label{fig:fm}\includegraphics[width=0.40\textwidth]{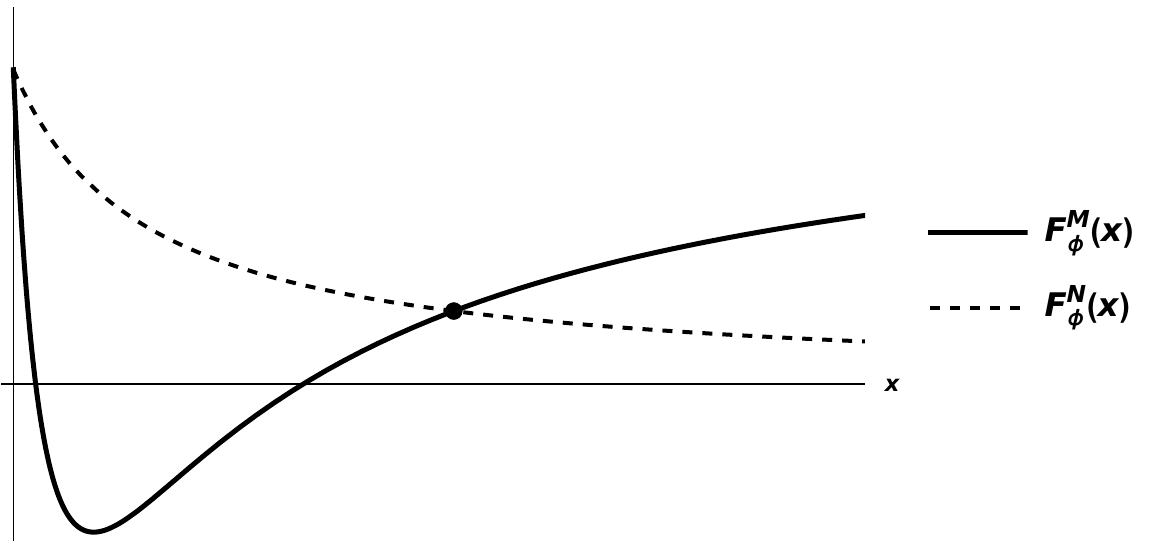}} \hspace{10pt}              
  \subfloat[$\phi \in (0, \phi_1)$]{\label{fig:gm}\includegraphics[width=0.35\textwidth]{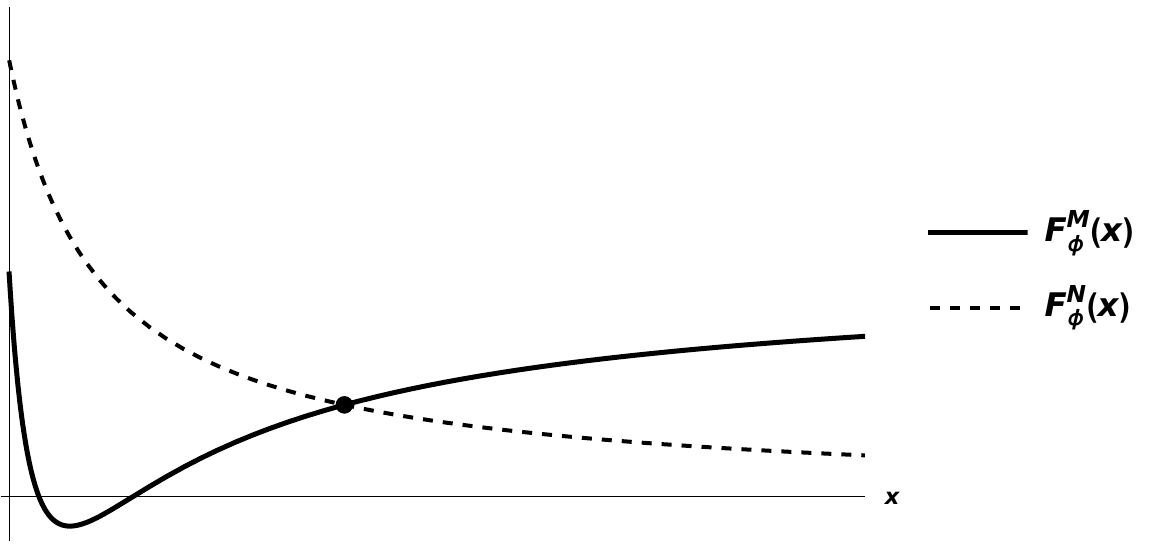}}
  \subfloat[$\phi \in (\phi_1, 1)$]{\label{fig:gm}\includegraphics[width=0.35\textwidth]{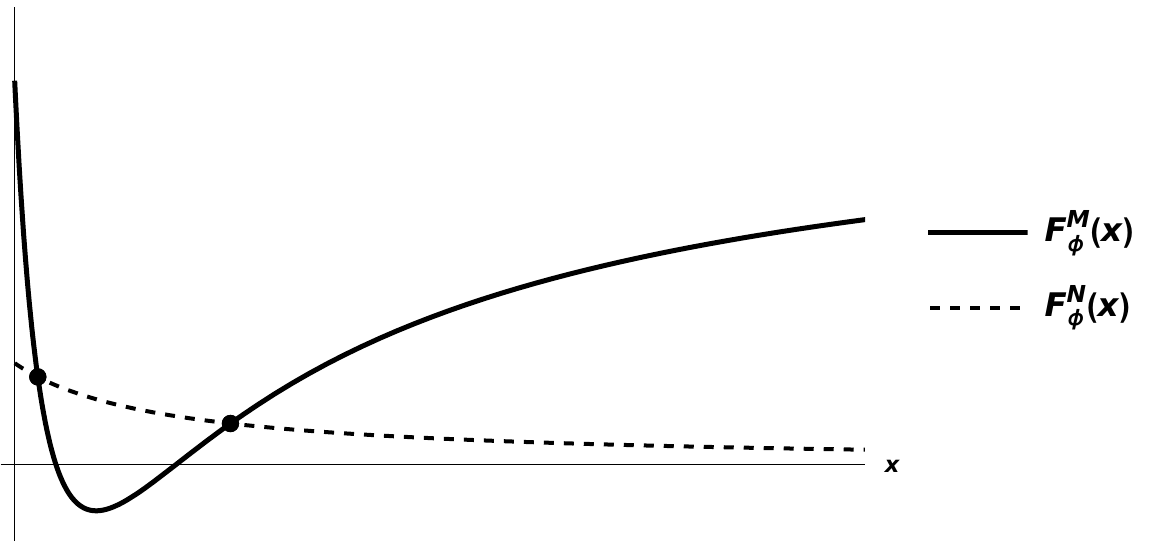}}
  	\label{r0a}
  \caption{} 
  \label{R0g1}
\end{figure}




\begin{remark}
The assumption $s_Nf_N(\frac{C_0}{p}) >1$  in lemma (\ref{L1}) is equivalent to $ C_0 < p \tilde x_N$, provided $qs_M < s_N$ and  $ps_N < s_M$, where $f_M^{'}(C_0) =0$.
\end{remark}

 So we make another hypothesis as follows.
\item 
\label{h3}
  For each $\phi \in (0, 1)$ the equation $Q_{\phi}(x) = 1 $ has at most one positive solution in the interval  $(0, \frac{C_0}{d_1})$ as shown in the figure (\ref{R0g1}),  where $Q_{\phi}(x)$ is the function as defined in lemma (\ref{L1}). And more ever as assumed in the lemma (\ref{L1})  choose $C_0 < p \tilde x_N, qs_M < s_N$ and  $ps_N < s_M$.

\end{enumerate}

\begin{example}
  Let $f_M(z)=\frac{7}{(1+z)^2}$ , $f_N(z)=\frac{3}{1+z}$ and
$ s_M=0.70, s_N=0.85, q=0.80,  p= 0.80$. In this case we get $\phi_1 =0.61$, so when  $\phi \in [0, 0.61)$ i.e. $R_0(0, \phi) > 1$,  the functions $F_{\phi}^M$ and $F_{\phi}^N$intersect at exactly one point. And if $\phi \in (0.61,1]$ i.e. $R_0(0, \phi) < 1$, the functions $F_{\phi}^M$ and $F_{\phi}^N$ intersect at two points. Hence the equation $Q_{\phi}(x)=1$ has one positive  solution for $R_0(0, \phi) > 1$, and two positive solutions for $R_0(0, \phi) < 1$.  The graphs corresponding to some $\phi$ values are given below. The functions $f_N$ , $f_M$ and other parameters satisfies all the hypothesis  (\textbf{\ref{Allee-type})} - (\textbf{\ref{h3}}).

\begin{figure} [h!]
	\centering
	
	\subfloat[$\phi =0.4$ i.e. 
	$R_0(0, \phi) > 1$]{\label{fig:fm}\includegraphics[width=0.40\textwidth]{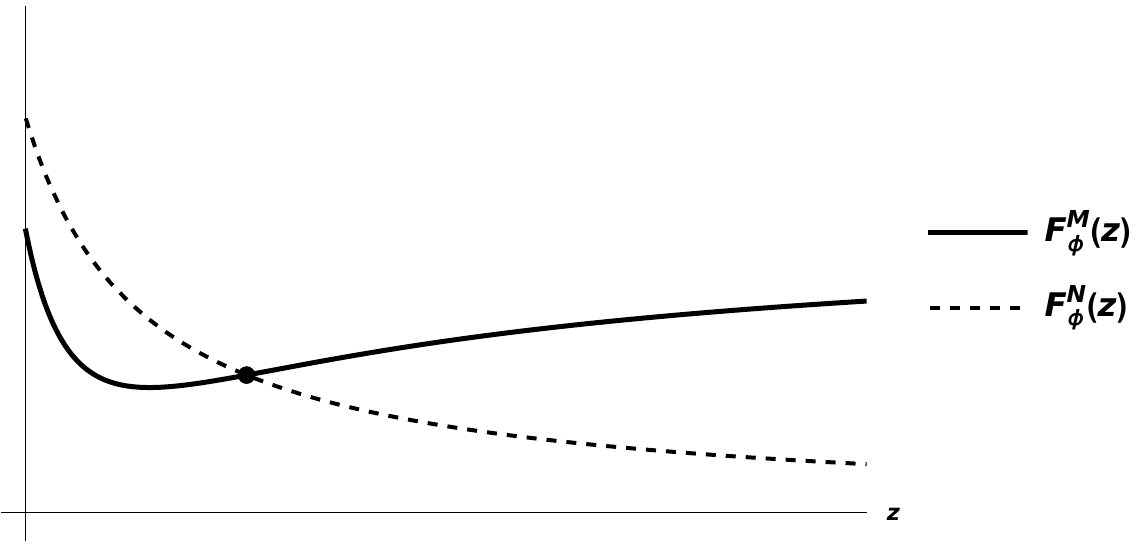}}             
	\subfloat[$\phi=0.75$ i.e. 
	$R_0(0, \phi) < 1$ ] {\label{fig:gm}\includegraphics[width=0.38\textwidth]{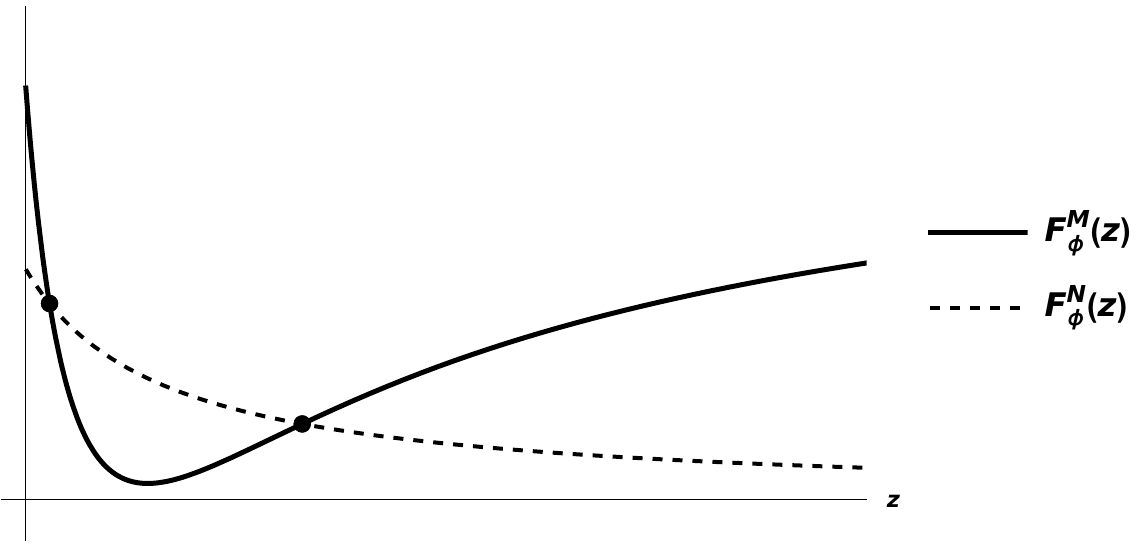}}

	\caption{} 
	\label{phi}
\end{figure}

\end{example}

\newpage

\begin{example}
Let $f_M(z)=\frac{8}{(1+z)^2}$ , $f_N(z)=\frac{9}{1+1.2z}$ and
$ s_M=0.60, s_N=0.28, q=0.45,  p= 0.90$. In this case we get $\phi_1 =0.60$, so when  $\phi \in [0, 0.60)$ i.e. $R_0(0, \phi) > 1$,  the functions $F_{\phi}^M$ and $F_{\phi}^N$ intersect at exactly one point. And if $\phi \in (0.60,1]$ i.e. $R_0(0, \phi) < 1$, the functions $F_{\phi}^M$ and $F_{\phi}^N$ intersect at two points. Hence the equation $Q_{\phi}(x)=1$ has one positive  solution for $R_0(0, \phi) > 1$, and two positive solutions for $R_0(0, \phi) < 1$. The graphs corresponding to some $\phi$ values are given below. The functions $f_N$ , $f_M$ and other parameters satisfies all the hypothesis  (\textbf{\ref{Allee-type})} - (\textbf{\ref{h3}})

\begin{figure} [!ht]
		\centering
		
		\subfloat[$\phi=0.50$ i.e. 
	$R_0(0, \phi) > 1$ ]{\label{fig:fm}\includegraphics[width=0.40\textwidth]{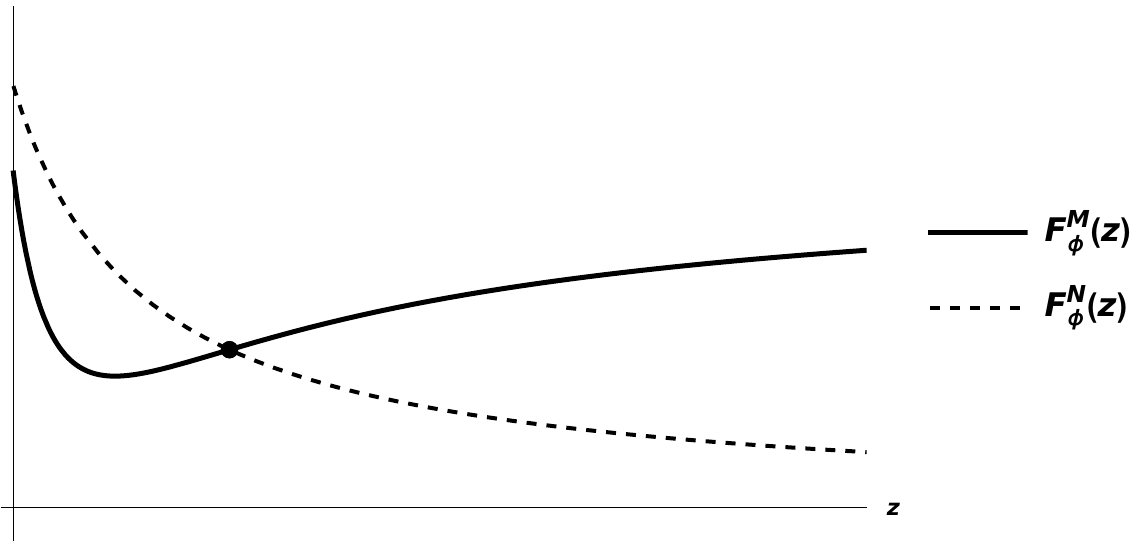}}               
		\subfloat[$\phi=0.85$ i.e. 
	$R_0(0, \phi) < 1$] {\label{fig:gm}\includegraphics[width=0.38\textwidth]{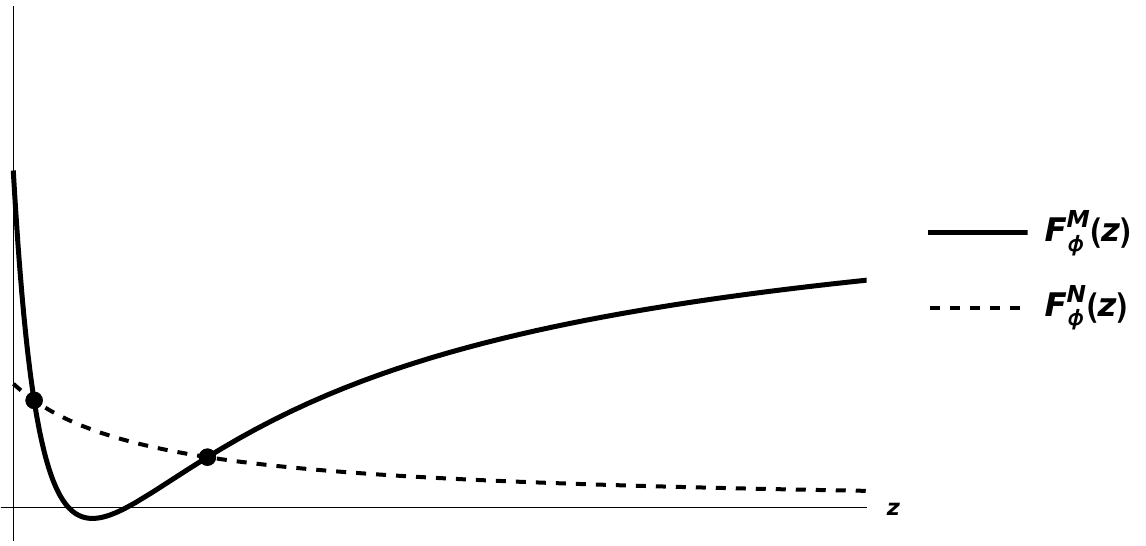}}
		
		\caption{} 
		\label{phi}
	\end{figure}
	
\end{example}

\subsection{The population dynamics}

 One can observe that if there exist a positive  fixed point $ x^* =(x_1^*, x_M^*, x_N^*)$  of the system \eqref{sys-eq}, then it  should satisfy the equation $R_{0}(\phi, x^{*})=1$. If we choose $d_1= \phi s_M + p (1-\phi) s_N $ and $d_2= q\phi s_M +  (1-\phi) s_N$, then the first coordinate $x_1^{*}$ satisfy the following
 \begin{equation}
 \label{Qx=1}
 Q_{\phi}(x)=\phi s_M f_M(d_1x) + (1-\phi) s_N f_N(d_2x) =1 
 \end{equation}
 Conversely suppose  any positive $x_1^*$ satisfy the equation \eqref{Qx=1}. If you choose $ x^* =(x_1^*, x_M^*, x_N^*) $ with $  x_M^*=\phi s_Mx_1^*$ and $ x_N^*=(1-\phi) s_Nx_1^*$, then  $R_0(x^*, \phi) =1 $ and $x^*$  is  a positive fixed point  of the model \eqref{sys-eq}.
 Since $x^*$ is uniquely defined by $x_1^*$, the  number of positive solution to the equation \eqref{Qx=1} is same as the number of positive fixed point of the system \eqref{sys-eq}. 

 We have the following result.
\begin{theorem} 
\label{pop}
Assume that the hypothesis (\textbf{\ref{Allee-type})} - (\textbf{\ref{h3}}) holds for the system \eqref{sys-1} then following holds with $\phi_1 = \frac{1-s_Nf_N(0)}{s_Mf_M(0)- s_Nf_N(0)}$:
\begin{enumerate}
\item For $\phi \in [0, \phi_1)$, the zero fixed point is unstable  and for $\phi \in (\phi_1, 1]$, it is locally  stable.
\vspace{0.1in}	
\item
If $\phi \in (0, 1)$, the system \eqref{sys-1} has exactly one locally stable positive fixed point $x^{*}(\phi)=(x_1^{*},  x_M^{*}, x_N^{*})$ with $x_1^{*} > \frac{C_0}{d_1}$, where $d_1 = \phi s_M + p(1-\phi) s_N$ and $f_M^{'}(C_0) = 0.$

\vspace{0.1in}
\item
 If $\phi=0$, then system $(\ref{sys-1})$ has a unique, non-zero locally stable fixed point $({\tilde x}_1, 0, {\tilde x}_N)$, where ${\tilde x}_N>0$ is the unique positive solution to the equation $s_N f_N(z)=1$, and If $\phi=1$, then system $(\ref{sys-1})$  has exactly one non-zero locally stable fixed point $({\hat x}_1, {\hat x}_M, 0)$, where ${\hat x}_M $ is a positive solution to the equation $s_M f_M(z)=1$, with $f_M^{'}({\hat x}_M) < 0$  and ${\hat x}_1={\hat x}_M/s_M$, 
${\tilde x}_1={\tilde x}_N/s_N$.

\end{enumerate}

\end{theorem}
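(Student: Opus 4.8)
The plan is to handle all three parts through a single device. Because the second and third rows of $A_1(X,\phi)$ do not depend on $X$, the Jacobian of the map $G(X)=A_1(X,\phi)X$ at any fixed point inherits the sparse shape of $A_1$, and a direct cofactor expansion shows its characteristic polynomial always factors as $-\lambda(\lambda^2-S)$ for a single scalar $S$ built from the fertility derivatives, so that the eigenvalues are $0$ and $\pm\sqrt{S}$ and local stability reduces to the scalar condition $|S|<1$. For part (1) the Jacobian at the origin is exactly $A_1(0,\phi)$, whose characteristic polynomial is $-\lambda\bigl(\lambda^2-R_0(0,\phi)\bigr)$, so $\rho=\sqrt{R_0(0,\phi)}$. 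Lemma~\ref{origin-stability} then finishes it: $R_0(0,\phi)>1$ on $[0,\phi_1)$ forces $\rho>1$ (unstable) and $R_0(0,\phi)<1$ on $(\phi_1,1]$ forces $\rho<1$ (locally stable).

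For part (2) I would first invoke the bijection recorded before the theorem between positive fixed points and positive roots of $Q_\phi(x)=1$ in \eqref{Qx=1}: Lemma~\ref{L1} supplies exactly one root $\bar x_1\in(C_0/d_1,\infty)$, and hypothesis~(\textbf{\ref{h3}}) at most one root in $(0,C_0/d_1)$. The stability computation is the heart of the matter. Substituting $z_M^\ast=d_1x_1^\ast$, $z_N^\ast=d_2x_1^\ast$, $x_M^\ast=\phi s_M x_1^\ast$, $x_N^\ast=(1-\phi)s_N x_1^\ast$ into the scalar above and simplifying, I expect it to collapse to the clean identity $S=1+x_1^\ast Q_\phi'(x_1^\ast)$. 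At $\bar x_1$ the crossing geometry from the proof of Lemma~\ref{L1} ($F_\phi^M$ increasing and $F_\phi^N$ strictly decreasing on $(C_0/d_1,\infty)$) gives $Q_\phi'(\bar x_1)<0$, hence $S<1$; while the hypotheses that $g_M(z)=zf_M(z)$ and $g_N(z)=zf_N(z)$ have positive derivatives give $z_i^\ast f_i'(z_i^\ast)>-f_i(z_i^\ast)$ and therefore $S>1-R_0(x^\ast,\phi)=0$. Thus $0<S<1$ and $\bar x_1$ is locally stable. Any additional root lies in $(0,C_0/d_1)$; since $Q_\phi(0)=R_0(0,\phi)$, the configuration assumed in (\textbf{\ref{h3}}) (Figure~\ref{R0g1}) makes this an upward, transversal crossing in the regime $\phi>\phi_1$, where $Q_\phi'>0$ forces $S>1$, so that equilibrium is unstable. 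Hence there is exactly one locally stable positive fixed point, the one with $x_1^\ast>C_0/d_1$.

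For part (3) I would specialise the same analysis to the boundary. At $\phi=0$ the entry $\phi s_M=0$ removes all recruitment into the migrant class, so $x_M\to0$ in one step and the dynamics collapse onto the non-migrant model, whose fixed point equation is $s_Nf_N(z)=1$; its unique root $\tilde x_N$ is stable because $g_N'>0$, and the transverse direction carries the eigenvalue $\phi s_M=0$, yielding $(\tilde x_1,0,\tilde x_N)$ with $\tilde x_1=\tilde x_N/s_N$. Symmetrically, at $\phi=1$ the factor $(1-\phi)s_N=0$ collapses onto the migrant model $s_Mf_M(z)=1$, whose stable root is $\hat x_M$ with $f_M'(\hat x_M)<0$ (the second root $\tilde x_M$ being the unstable Allee threshold), the transverse eigenvalue being $(1-\phi)s_N=0$, yielding $(\hat x_1,\hat x_M,0)$ with $\hat x_1=\hat x_M/s_M$.

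The main obstacle I anticipate is the two-sided control of $S$ at the interior equilibrium. The upper bound $S<1$ comes essentially for free from the geometry of the two curves in Lemma~\ref{L1}, but the lower bound $S>-1$ (needed to rule out a modulus-$>1$ real eigenvalue) is precisely where the monotonicity of $g_M$ and $g_N$ must be used, and isolating the identity $S=1+x_1^\ast Q_\phi'(x_1^\ast)$ is the simplification that makes both bounds transparent. A secondary subtlety is confirming that the interior root destabilises the count only in the $\phi>\phi_1$ regime, which rests on the assumed intersection pattern of (\textbf{\ref{h3}}) rather than on a free computation.
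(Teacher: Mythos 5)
Your proposal is correct and follows essentially the same route as the paper: reduce the fixed-point count to the roots of $Q_\phi(x)=1$ via Lemma~\ref{L1} and hypothesis~(\textbf{\ref{h3}}), compute the nonzero eigenvalues at an equilibrium as $\pm\sqrt{1+x_1^*Q_\phi'(x_1^*)}$, get the upper bound from the sign of $Q_\phi'$ at the crossing and the lower bound from the monotonicity of $g_M$ and $g_N$ (which is exactly the paper's $|x_1^*Q_\phi'(x_1^*)|<R_0(x^*,\phi)=1$ estimate), and handle $\phi=0,1$ by collapsing to the planar boundary systems. The identity $S=1+x_1^*Q_\phi'(x_1^*)$ that you anticipated is precisely the simplification the paper carries out.
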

\begin{proof}
	
For $\phi \in [0, \phi_1)$, $R_0(0, \phi) > 1$  and for  $\phi \in  (\phi_1, 1], R_0(0, \phi) < 1$ by Lemma (\ref{origin-stability}). So for $\phi \in  (\phi_1, 1]$ the origin is a locally stable fixed point and for $\phi \in [0, \phi_1)$, it is unstable.

 \textbf{Case I :} $\phi \in (0, 1).$ \\
 Suppose $R_0(0, \phi) \geq 1$, then by lemma (\ref{L1}) and hypothesis $(\textbf{H}3)$, for each $\phi$ the system \eqref{sys-1} has one positive fixed point $x^{*}(\phi)=(x_1^{*},  x_M^{*}, x_N^{*})$ with $x_1^{*} > \frac{C_0}{d_1}$.  The Jacobian matrix $J$ defined at the fixed point $(x_1^{*}, x_M^{*}, x_N^{*})$ is 
 \begin{equation}
 J=\begin{pmatrix}
 0 & F_1 & F_2 \\ \phi s_M & 0 & 0\\
 (1-\phi) s_N & 0 & 0
 \end{pmatrix}
 \end{equation} where $F_1=x_M^{*}{(t)}f'_M(z_M^{*}{(t)})+f_M(z_M^{*}{(t)})+q x_N^{*}{(t)}f'_N(z_N^{*}{(t)})$ and $F_2=px_M^{*}{(t)}f'_M(z_M^{*}{(t)})+ x_N^{*}{(t)}f'_N(z_N^{*}{(t)})+f_N(z_N^{*}{(t)})$. 
  The eigen values of the matrix J are found to be $0$, and  $\pm \sqrt{1 + x_1^{*}.Q_{\phi}'(x_1^{*})}$.
Where $$Q_{\phi}'(x_1^{*}) = d_1 \phi s_M f'_M(d_1x_1^{*}{(t)}) + d_2 (1-\phi) s_N f'_N(d_2x_1^{*}{(t)})$$
 with $d_1= \phi s_M + p (1-\phi) s_N $ and $d_2= q\phi s_M +  (1-\phi) s_N$.
Since the derivative of the functions $xf_M(d_1 x)$ and $xf_N(d_2 x)$ are positive, we have $\abs{d_1 x_1^{*} f'_M(d_1x_1^{*})} < \abs {f_M(d_1 x_1^{*})}$ and $\abs{d_2 x_1^{*} f'_N(d_2 x_1^{*})} < \abs {f_N(d_2 x_1^{*})}$. So $\abs{ x_1^{*}.Q_{\phi}'(x_1^{*})} <  \abs{d_1 \phi s_M x_1^{*} f'_M(d_1x_1^{*})} +  \abs{d_2 (1-\phi) s_N x_1^{*} f'_N(d_2x_1^{*})} < 1$. So the eigen value  $ \lambda = |\sqrt{1 + x_1^{*}.Q_{\phi}'(x_1^{*})}|$ of the matrix J is dominant, positive and less than one as $ x_1^{*} Q_{\phi}'(x_1^{*}) < 0$. Hence the fixed point $(x_1^{*}, x_M^{*}, x_N^{*})$ is locally asymptotically stable. Now suppose $R_0(0, \phi) < 1$, by lemma (\ref{L1}) and hypothesis $(\textbf{H}3)$, for each $\phi$ the system \eqref{sys-1} has two positive fixed points say $x^{**}(\phi)$ and $x^{*}(\phi)$.  Suppose $x_1^{**}$ and $x_1^{*}$ are the corresponding first coordinates with $x_1^{**} < \frac{C_0}{d_1} < x_1^{*}$.  Similar to the previous case (for $R_0(0, \phi) \geq 1$) the  eigen values of the Jacobian of the system \eqref{sys-1} at $x^{**}$ are $0,$ and $ \pm \sqrt{1 + x_1^{**}.Q_{\phi}'(x_1^{**})}$. In this case $x_1^{**}.Q_{\phi}'(x_1^{**})$ is positive as $Q_{\phi}(x) - 1$ is an increasing function at $x_1^{**}$, so the eigen value $\lambda = |\sqrt{1+ x_1^{**}.Q_{\phi}'(x_1^{**})}|$ is dominant, positive and bigger than one which make the fixed point $x^{**}(\phi)$ locally asymptotically  unstable . On the other hand $ x^{*}$ is locally stable using the same argument as in the case of $ R_0(0, \phi) \geq 1$ . So for each $\phi \in (0, 1)$, the system \eqref{sys-1} has exactly one locally stable fixed point $(x_1^{*}(\phi), x_M^{*}(\phi), x_N^{*}(\phi))$.\\

 \textbf{Case II :}
Suppose $\phi =0$, then notice that every orbit of $(\ref{sys-1})$ enters the invariant part of the the boundary of $\reals^3_+$ where $x_M=0$ in $1$ step. The restriction of the dynamics to this part of the boundary is given by a planar system:
\begin{equation}
\begin{pmatrix}
\label{planar}
x_1(t+1)\\
x_N(t+1)
\end{pmatrix}=\begin{pmatrix} 0&f_N(x_N(t))\\
s_N& 0\end{pmatrix} \begin{pmatrix}x_1(t)\\x_N(t) \end{pmatrix}
\end{equation}
Suppose that  $(x_1^{*}, x_N^{*})$ is a positive fixed point of the system \eqref{planar} then it turns out that   
 $s_Nf_N(x_N^{*}{(t)})= 1$
  which has a  unique positive solution say ${\tilde x}_N$. Choose  ${\tilde x}_1=\frac{{\tilde x}_N}{s_N}$, clearly $({\tilde x}_1, 0, {\tilde x}_N)$  the unique non-zero fixed point of the system \eqref{sys-1} when $\phi=0$. The Jacobian matrix when we 
 linearize the system \eqref{planar} near the fixed point $(\frac{{\tilde x}_N}{s_N},{\tilde x}_N)$  is given by
  \begin{equation}
      J_1=\begin{pmatrix}
0& {\tilde x}_N f_N{'}({\tilde x}_N)+f_N({\tilde x}_N)\\
 s_N& 0
\end{pmatrix}
  \end{equation}
  The  eigen values of the matrix $J_1$ are given by 
 $ \lambda= \pm \sqrt{1+s_N{\tilde x}_N f'_N({\tilde x}_N)}$.
Since the function $zf_N(z)$ is increasing and $f_N(z)$ is decreasing, $\mid  \lambda \mid< 1$, so the fixed point $({\tilde x}_1, 0, {\tilde x}_N)$ is linearly stable.

 Now suppose $\phi =1$, as similar to the case $\phi=0$, the fixed point is of the form $(x_1^{*}, x^{*}_M, 0)$, where $s_Mf_M(x^{*}_M) =1$ and $x_1^{*} = \frac{x_M^{*}}{s_M}$. The equation $s_Mf_M(x^{*}_M) =1$  has two solution ${\tilde x_M}$ and ${\hat x_M}$ satisfying ${\tilde x_M} < C_{0} < {\hat x_M}$ which corresponds to two fixed points ${\tilde{x}}$  and ${\hat {x}}$ of the system (9). The eigen values of the Jacobian, if we linearize the system near any fixed point $x^{*}=(x_1^{*}, x^{*}_M, 0)$ are $\lambda= \pm \sqrt{1+s_N x^{*}_M f'_M(x^{*}_M)}$. So $\abs{\lambda} > 1$ at $x^{*} = {\tilde x} $ as $f'_M({\tilde x_M}) >0 $, hence it is an unstable fixed point and $\abs{\lambda} < 1$ at $x^{*} = {\hat x_M} $ by the same argument as in the case $\phi=0$ so it is a stable fixed point.

  So in summary, we just proved that for each $\phi \in [0, 1]$, there is exactly one locally stable fixed point for the system \eqref{sys-1}. 
  
  \end{proof}

\section{Evolutionary game  and  ESS }
If Allee effect is present only in migrant population, the dynamics of the system (\ref{sys-1}) undergoes a bifurcation as the zero fixed point stability changes when parameter $\phi$ passes through $\phi_1$. The goal  is to investigate how the evolution of partial migration is affected when only the migrants experience Allee effects.  To address this question, we will use an evolutionary Game theory approach  as advocated in \cite{cushing1998introduction,lande1976natural,lande1982quantitative,vincent2005evolutionary}.
In this methodology, an individual's allocation strategy is denoted by $v$, and 
the {\it mean} allocation strategy $\phi(t)$ in the population is treated as a dynamic state variable whose dynamics are governed by Lande's equation (or the breeder's equation, Fisher's equation, or the canonical equation of evolution). The methodology provides a coupled system for the population dynamics and the mean allocation strategy, known as the Darwinian dynamics: 
\begin{eqnarray}
\begin{pmatrix}

x_1(t+1)\\
x_M(t+1)\\
x_N(t+1)
\end{pmatrix}&=&
\begin{pmatrix}
0& f_M(z_M(t))& f_N(z_N(t))\\
v s_M& 0& 0\\
(1-v)s_N& 0& 0
\end{pmatrix}\Biggr\rvert_{v=\phi(t)}
\begin{pmatrix}
x_1(t)\\
x_M(t)\\
x_N(t)
\end{pmatrix}\label{eg1}\\
\phi(t+1)&=&\phi(t)+\sigma^2\frac{\partial \ln(\lambda(x(t),v)}{\partial v}\Big\vert_{v=\phi(t)}\label{eg2},
\end{eqnarray}
Equation $(\ref{eg2})$ states that the change in the mean strategy is proportional to the fitness gradient. 
Fitness here is taken to be $\ln \left( \lambda(x,v) \right)$, where $\lambda(x,v)$ is the dominant eigenvalue of the matrix
$$
A_1(x,v)=\begin{pmatrix}
0&f_M(z_M)& f_N(z_N)\\
vs_M&0&0\\
(1-v)s_N&0&0
\end{pmatrix}.
$$
The constant $\sigma^2$ is related to the (assumed constant) variance of the strategy throughout the population (equal, or proportional to the variance, depending on how the trait dynamics are derived) and is referred to as the speed of evolution.

A straightforward calculation shows that  $\lambda(x,v)$ equals the square root of the basic reproduction number associated to $A_1(x,v)$, which we already defined in $(\ref{R0:def})$:
$$
\lambda(x,v)= (R_0(x,v))^{1/2},\textrm{ where } R_0(x,v):=v s_Mf_M(z_M)+(1-v)s_Nf_N(z_N).
$$
Hence, system $(\ref{eg1})-(\ref{eg2})$ can be re-written as:
\begin{eqnarray}
\label{sys-2}
\begin{pmatrix}
x_1(t+1)\\
x_M(t+1)\\
x_N(t+1)
\end{pmatrix}&=&
\begin{pmatrix}
0& f_M(z_M(t))& f_N(z_N(t))\\
\phi(t) s_M& 0& 0\\
(1-\phi(t))s_N& 0& 0
\end{pmatrix}
\begin{pmatrix}
x_1(t)\\
x_M(t)\\
x_N(t)
\end{pmatrix}\label{eq13}\\
\phi(t+1)&=&\phi(t)+\frac{1}{2}\frac{\sigma^2}{R_0(x(t),\phi(t))}\frac{\partial R_0(x(t),v)}{\partial v}\Big\vert_{v=\phi(t)}\label{eq14},
\end{eqnarray}
We first study this system for $\sigma=0$, i.e. when there are no evolutionary forces at work:
\begin{theorem}\label{sigma=0} Assume that $\sigma^2=0$. Suppose that 
	$({\bf H1})-({\bf H3})$ hold,
	
	Then the following holds:
	\begin{enumerate}
		\item For every fixed $\phi_0$ in $(0,1)$, there is a  positive fixed point $(x^*(\phi_0),\phi_0)$ such that 
		
		every positive solution of system $(\ref{eq13})-(\ref{eq14})$ with initial condition $(x_0,\phi_0)$ for $x_0$ in a neighborhood of $(x^*(\phi_0)$, converges to $(x^*(\phi_0),\phi_0)$. The fixed point $(x^*(\phi_0),\phi_0)$ 
		is locally stable with respect to positive initial conditions with arbitrary  positive $x_0$, but fixed $\phi_0$.
		\item
		If $\phi_0=0$, then every positive solution of system $(\ref{eq13})-(\ref{eq14})$ with initial condition $(x_0,\phi_0)$ for arbitrary positive $x_0$, converges to a unique non-zero fixed point $({\tilde x}_1, 0, {\tilde x}_N,0)$, where ${\tilde x}_N>0$ is the unique positive solution to the equation $s_Nf_N(z)=1$, and ${\tilde x}_1={\tilde x}_N/s_N$. This fixed point is linearly stable with respect to initial conditions with arbitrary  positive $x_0$, but fixed $\phi_0=0$.
		\item
		If $\phi_0=1$, then every positive solution of system $(\ref{eq13})-(\ref{eq14})$ with initial condition $(x_0,\phi_0)$ for arbitrary positive $x_0$, converges to a unique non-zero fixed point $({\hat x}_1, {\hat x}_M, 0,1)$, where ${\hat x}_M>0$ is a positive solution to the equation $s_Mf_M(z)=1$, and ${\hat x}_1={\hat x}_M/s_M$. This fixed point is linearly stable with respect to initial conditions with arbitrary positive $x_0$, but fixed $\phi_0=1$.
	\end{enumerate}
	
\end{theorem}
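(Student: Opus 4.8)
The plan is to exploit the fact that setting $\sigma^2=0$ freezes the strategy: equation $(\ref{eq14})$ collapses to $\phi(t+1)=\phi(t)$, so along any orbit $\phi(t)\equiv\phi_0$, and the first three components evolve exactly as the population model $(\ref{sys-1})$ with the fixed parameter $\phi=\phi_0$. The whole statement therefore reduces to transporting the conclusions of Theorem $\ref{pop}$ into the augmented $(x,\phi)$ phase space and upgrading its linear-stability assertions to genuine convergence. The Jacobian of the coupled map at a fixed point $(x^*,\phi_0)$ is block upper-triangular, $\left(\begin{smallmatrix} J & * \\ 0 & 1\end{smallmatrix}\right)$, because the frozen $\phi$-update gives $\partial\phi(t+1)/\partial x=0$ and $\partial\phi(t+1)/\partial\phi=1$; hence its spectrum is that of the $x$-block $J$ from Theorem $\ref{pop}$ together with a neutral eigenvalue $1$ reflecting the line of fixed points parametrized by $\phi_0$. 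Restricting, as the theorem does, to perturbations that keep $\phi=\phi_0$ removes this neutral direction, so stability and convergence are governed entirely by $J$.

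For part (1), $\phi_0\in(0,1)$, Theorem $\ref{pop}(2)$ already supplies the unique positive fixed point $x^*(\phi_0)$ with $x_1^*>C_0/d_1$ and shows the dominant eigenvalue of $J$ satisfies $|\lambda|<1$. I would simply invoke the principle of linearized stability (the discrete-time stable-manifold/Hartman--Grobman theorem for maps): a fixed point whose linearization is a contraction is locally asymptotically stable, which is exactly the claimed local convergence of every orbit started in an $x$-neighborhood of $x^*(\phi_0)$ at fixed $\phi_0$.

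For parts (2) and (3) the key structural observation is that the orbit lands on an invariant coordinate face after a single step: when $\phi_0=0$ we have $x_M(t+1)=\phi_0 s_M x_1(t)=0$, and when $\phi_0=1$ we have $x_N(t+1)=(1-\phi_0)s_N x_1(t)=0$. On each face the dynamics reduce to the planar map $(\ref{planar})$ (respectively its $f_M$-analogue), whose second iterate is the scalar map $x\mapsto s_I x f_I(x)=s_I g_I(x)$, $I=N,M$. By $(\textbf{\ref{monotone-sys}})$ and $(\textbf{\ref{Allee-type}})$ the maps $g_N,g_M$ have positive derivative, so these scalar maps are \emph{increasing}. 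For $\phi_0=0$, $s_Nf_N(0)>1$ forces a unique positive fixed point $\tilde x_N$ with the map above the diagonal below $\tilde x_N$ and below it above $\tilde x_N$; a standard monotone-convergence argument then yields global convergence of every positive orbit to $\tilde x_N$, hence of the full system to $(\tilde x_1,0,\tilde x_N,0)$, with linear stability from $|\lambda|=|\sqrt{1+s_N\tilde x_N f_N'(\tilde x_N)}|<1$ as in Theorem $\ref{pop}$.

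The main obstacle is part (3). Here $s_Mf_M(0)<1$ while $s_Mf_M(C_0)>1$, so the increasing scalar map $x\mapsto s_M x f_M(x)$ crosses the diagonal \emph{twice}, at $\tilde x_M<C_0<\hat x_M$. Consequently orbits started below the Allee threshold $\tilde x_M$ decrease monotonically to $0$, and the blanket claim that \emph{every} positive $x_0$ converges to $\hat x_M$ cannot hold literally; the correct statement is convergence from the basin $\{x_0:\ x_M\text{-component exceeds }\tilde x_M\}$, on which the identical monotone-convergence argument applies, while linear stability of $\hat x_M$ is unconditional, coming from $f_M'(\hat x_M)<0$ and hence $|\lambda|<1$. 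I would therefore prove parts (2)--(3) by (i) reducing to the scalar second-iterate map, (ii) using monotonicity together with the sign of $s_I g_I(x)-x$ to establish componentwise monotone convergence within the relevant basin, and (iii) lifting back through $x_1(t+1)=g_I(x(t))$ and the frozen $\phi$-equation to the four-dimensional system, flagging explicitly that for $\phi_0=1$ the threshold $\tilde x_M$ delimits the basin of attraction.
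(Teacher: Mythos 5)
Your proposal is correct and follows essentially the same route as the paper: the paper's entire proof is one sentence, observing that when $\sigma^2=0$ the slice $\{\phi=\phi_0\}$ is forward invariant and then citing Theorem \ref{pop}. Your version fills in what that citation actually requires: the block upper-triangular Jacobian $\left(\begin{smallmatrix} J & * \\ 0 & 1\end{smallmatrix}\right)$ explaining the neutral eigenvalue $1$ and why restricting to fixed $\phi_0$ removes it, and the reduction of parts (2)--(3) to the increasing scalar second-iterate maps $x\mapsto s_I g_I(x)$. Two things you add are genuinely beyond the paper. First, Theorem \ref{pop} only asserts \emph{local} stability of the boundary fixed points, whereas parts (2) and (3) of the present theorem claim convergence from \emph{arbitrary} positive $x_0$; your monotone-convergence argument via the increasing second iterate is exactly the missing step needed to upgrade local to global in part (2), and the paper's ``follows immediately'' silently skips it. Second, and more importantly, you are right that part (3) as stated is false: for $\phi_0=1$ the scalar map $x\mapsto s_M x f_M(x)$ crosses the diagonal at both $\tilde x_M$ and $\hat x_M$, the origin is locally stable (since $s_Mf_M(0)<1$), and positive orbits starting with migrant density below the Allee threshold $\tilde x_M$ decrease monotonically to extinction rather than converging to $\hat x$. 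The correct conclusion is convergence only from the basin above the threshold, with linear stability of $\hat x$ unconditional. This is a real defect in the theorem's statement that the paper's proof does not confront, and your proposal handles it correctly.
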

\begin{proof}
	The proof follows immediately from Theorem $\ref{pop}$, and the fact  that for each $\phi_0$ in $[0,1]$, the set 
	$\{(x,\phi)\in \reals^3_+ \times [0,1]\; | \; \phi=\phi_0\}$ is forward invariant for solutions of system $(\ref{eq13})-(\ref{eq14})$ when $\sigma^2=0$.
\end{proof}

\begin{definition}\label{ESS-def}
	Suppose $(x^{*}(\phi^{*}), \phi^{*})$ is the stable fixed point of the system $\eqref{eq13}- \eqref{eq14}$ for  $\phi^*$ in $[0,1]$ and for $\sigma = 0$. We say  $\phi^*$ is an evolutionary stable strategy (ESS) if $(x^*(\phi^*), \phi^{*})$ is a locally asymptotically stable fixed point of system $(\ref{eq13})-(\ref{eq14})$ for small positive $\sigma^2 \ne 0$. 
\end{definition}
This notion captures that if the population has adopted an ESS, then it can not be invaded by other population that use nearby strategies. Our next result shows the existence and uniqueness of an ESS.

\begin{theorem}\label{pos-sigma}
	
	Assume that $\sigma^2>0$. Suppose that $({\bf H1})-({\bf H3})$ hold, and the carrying capacities  satisfies the  inequalities :
	$  q {\hat x_M} < {\tilde x_N}$ and  $  p{\tilde x_N} <  {\hat x_M} $. Then the 
	system $(\ref{eq13})-(\ref{eq14})$ has a fixed point $(x^*(\phi^*),\phi^*)$ in $\reals^3_+ \times [0,1]$, where  
	 $x^*(\phi^*)$ is the unique stable positive fixed point of system $(\ref{sys-1})$ with $\phi=\phi^*$ (see Theorem $\ref{pop}$) and $\phi^*$ is given by the following formula
	 \begin{equation}
	 \label{ESS-value}
	 \phi^*=\frac{\frac{{\hat x}_M-p{\tilde x}_N}{{\tilde x}_N-q{\hat x}_M}}{\frac{{\hat x}_M-p{\tilde x}_N}{{\tilde x}_N-q{\hat x}_M}+\frac{s_M}{s_N}}.
	 \end{equation}
	  Moreover,  $(x^*(\phi^*),\phi^*)$ is a locally asymptotically stable fixed point of system $(\ref{eq13})-(\ref{eq14})$ for sufficiently  small positive $\sigma^{2}.$ So $\phi^{*}$ is a unique ESS.

\end{theorem}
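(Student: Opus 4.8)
The plan is to characterise the fixed points of the full Darwinian system \eqref{eq13}--\eqref{eq14} and then to perturb away from the non-evolving case $\sigma^2=0$. A point $(x^*,\phi^*)$ is fixed iff $x^*$ is a fixed point of \eqref{sys-1} at $\phi=\phi^*$, so that $R_0(x^*,\phi^*)=1$ by Theorem \ref{pop}, and, since $\sigma^2>0$ and $R_0(x^*,\phi^*)=1\neq0$, the fitness gradient vanishes: $\tfrac{\partial R_0(x^*,v)}{\partial v}\big|_{v=\phi^*}=0$. The key structural observation is that $\tfrac{\partial R_0}{\partial v}=s_Mf_M(z_M)-s_Nf_N(z_N)$ carries \emph{no} explicit $v$-dependence, so the two fixed-point conditions together force $s_Mf_M(z_M^*)=s_Nf_N(z_N^*)=1$. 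Hence $z_M^*=\hat x_M$ (the decreasing branch, consistent with the stable equilibrium $x_1^*>C_0/d_1$ of Theorem \ref{pop}) and $z_N^*=\tilde x_N$. Using the relations $z_M^*=d_1(\phi^*)x_1^*$ and $z_N^*=d_2(\phi^*)x_1^*$ from the proof of Theorem \ref{pop} and eliminating $x_1^*$ from the ratio $\hat x_M/\tilde x_N=d_1(\phi^*)/d_2(\phi^*)$ gives a single linear equation for $\phi^*$ whose solution is precisely \eqref{ESS-value}; the hypotheses $q\hat x_M<\tilde x_N$ and $p\tilde x_N<\hat x_M$ make both numerator and denominator of the quotient positive, so $\phi^*\in(0,1)$.

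For stability I would linearise the $4$-dimensional map at $(x^*,\phi^*)$, which is itself independent of $\sigma^2$. Because $\tfrac{\partial R_0}{\partial v}$ is $\phi$-free and vanishes at the fixed point, the strategy update contributes $J_{\phi\phi}=1$ and $J_{\phi x}=\tfrac{\sigma^2}{2}\,\partial_x H|_*$, where $H(x):=s_Mf_M(z_M)-s_Nf_N(z_N)$. Thus the Jacobian takes the block form
\begin{equation*}
\tilde J(\sigma^2)=\begin{pmatrix} J_{xx} & u\\[2pt] \sigma^2 w^{\top} & 1\end{pmatrix},
\end{equation*}
with $J_{xx}$ the population Jacobian of Theorem \ref{pop} (eigenvalues $0$ and $\pm\sqrt{1+x_1^*Q_{\phi^*}'(x_1^*)}$, all of modulus $<1$), $u=\partial_\phi(\text{population update})|_*$, and $w=\tfrac12\partial_x H|_*$. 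At $\sigma^2=0$ this is block-triangular with a simple eigenvalue $1$ (left eigenvector $(0,1)$, right eigenvector $((I-J_{xx})^{-1}u,\,1)$), while the other three eigenvalues lie strictly inside the unit disc. Analytic perturbation theory then gives the critical eigenvalue $\lambda(\sigma^2)=1+\sigma^2\,w^{\top}(I-J_{xx})^{-1}u+O(\sigma^4)$, and differentiating the equilibrium relation $x^*(\phi)=A_1(x^*,\phi)x^*$ shows $dx^*/d\phi=(I-J_{xx})^{-1}u$, so the first-order coefficient equals $\tfrac12 g'(\phi^*)$ with $g(\phi):=H(x^*(\phi))$ the equilibrium fitness gradient. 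Consequently, for small $\sigma^2>0$ the three inside eigenvalues stay inside by continuity and the critical one moves inward precisely when $g'(\phi^*)<0$.

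The crux is therefore the sign of $g'(\phi^*)$, which I expect to be the main obstacle, both because of the perturbation bookkeeping above and because one has no closed form for $dx_1^*/d\phi$. Writing $A(\phi)=s_Mf_M(z_M^*(\phi))$, $B(\phi)=s_Nf_N(z_N^*(\phi))$, the equilibrium identity $\phi A+(1-\phi)B\equiv1$, differentiated at $\phi^*$ where $A=B=1$ and $g=0$, yields $\phi^*A'(\phi^*)+(1-\phi^*)B'(\phi^*)=0$, hence $g'(\phi^*)=A'(\phi^*)/(1-\phi^*)=-B'(\phi^*)/\phi^*$. Since $f_M'(\hat x_M)<0$ and $f_N'(\tilde x_N)<0$, the sign is governed by $dz_M^*/d\phi=(s_M-ps_N)x_1^*+d_1\,dx_1^*/d\phi$ and $dz_N^*/d\phi=(qs_M-s_N)x_1^*+d_2\,dx_1^*/d\phi$, where $s_M-ps_N>0$ and $qs_M-s_N<0$ by hypothesis and $d_1,d_2>0$. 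The device that avoids computing $dx_1^*/d\phi$ is a sign dichotomy: if $dx_1^*/d\phi\ge0$ then $dz_M^*/d\phi>0$, forcing $A'(\phi^*)<0$ and $g'(\phi^*)<0$; if $dx_1^*/d\phi\le0$ then $dz_N^*/d\phi<0$, forcing $B'(\phi^*)>0$ and again $g'(\phi^*)<0$. Either way $g'(\phi^*)<0$, completing the stability claim. Finally, uniqueness of the ESS follows because the conditions $s_Mf_M(z_M^*)=s_Nf_N(z_N^*)=1$ on the stable population branch admit the single interior solution $\phi^*$, so no other interior strategy can be a rest point of the strategy dynamics.
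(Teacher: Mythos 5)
Your characterisation of the fixed points and the derivation of formula \eqref{ESS-value} is essentially the paper's argument: both of you force $s_Mf_M(z_M^*)=s_Nf_N(z_N^*)=1$ from the vanishing fitness gradient together with $R_0(x^*,\phi^*)=1$, select the branch $z_M^*=\hat x_M$, and eliminate $x_1^*$ to obtain $\phi^*$ (your ratio $d_1(\phi^*)/d_2(\phi^*)=\hat x_M/\tilde x_N$ is equivalent to the paper's $x_M^*/x_N^*=\frac{s_M}{s_N}\frac{\phi^*}{1-\phi^*}$). The paper discards the branch $z_M^*=\tilde x_M$ more explicitly than you do, by noting that $\tilde x_M<C_0<p\tilde x_N$ makes $x_M^*=(\tilde x_M-p\tilde x_N)/(1-pq)<0$, so that candidate is not a positive fixed point; that one line is preferable to an appeal to consistency with the stable branch. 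Where you genuinely diverge is the stability analysis. The paper writes out the full quartic characteristic polynomial $F(\lambda,\sigma^2)$ of the $4\times 4$ Jacobian, observes that $\lambda=1$ is the simple dominant root at $\sigma^2=0$, and verifies $F(1,\sigma^2)>0$ by an explicit algebraic simplification, $(fh-eg)-(af-be)(ch+dg)=f_M'(z_M^*)f_N'(z_N^*)s_Ms_N(1-pq)(x_N^*s_M+x_M^*s_N)>0$, using $f_M'(\hat x_M)<0$ and $f_N'(\tilde x_N)<0$. You instead apply first-order perturbation theory for the simple eigenvalue at $1$, identify its derivative in $\sigma^2$ with $\tfrac12 g'(\phi^*)$ where $g(\phi)=s_Mf_M(z_M^*(\phi))-s_Nf_N(z_N^*(\phi))$, and prove $g'(\phi^*)<0$ by the sign dichotomy on $dx_1^*/d\phi$ (using $ps_N<s_M$ and $qs_M<s_N$ from ({\bf H3}) and the identity $\phi^*A'(\phi^*)+(1-\phi^*)B'(\phi^*)=0$). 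I checked the key ingredients of your route --- the left/right eigenvectors of the block-triangular $J_D(0)$, the identity $dx^*/d\phi=(I-J_{xx})^{-1}u$, and the dichotomy --- and they are all sound. Your approach buys a conceptual interpretation (stability for small $\sigma^2$ is exactly the statement that the equilibrium fitness gradient is decreasing at $\phi^*$, i.e.\ convergence stability) and avoids the determinant bookkeeping; the paper's computation is more elementary and self-contained but gives less insight into why the sign comes out right.
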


\begin{proof}
	The system of equations, when evolutionary force at work are 
	\begin{eqnarray}
	\label{eq16}
	\begin{pmatrix}
	x_1(t+1)\\
	x_M(t+1)\\
	x_N(t+1)
	\end{pmatrix}&=&
	\begin{pmatrix}
	0& f_M(z_M(t))& f_N(z_N(t))\\
	\phi(t) s_M& 0& 0\\
	(1-\phi(t))s_N& 0& 0
	\end{pmatrix}
	\begin{pmatrix}
	x_1(t)\\
	x_M(t)\\
	x_N(t)
	\end{pmatrix}\\
	\phi(t+1)&=&\phi(t)+\frac{1}{2}\frac{\sigma^2}{R_0(x(t),\phi(t))}\frac{\partial R_0(x(t),v)}{\partial v}\Big\vert_{v=\phi(t)}\label{eq17},
	\end{eqnarray}
	
	$$\textrm{ where } R_0(x,v):=v s_Mf_M(z_M)+(1-v)s_Nf_N(z_N).
	$$
	$$\frac{\partial R_0(x(t),v)}{\partial v}\Big\vert_{v=\phi(t)}\label{eg}=s_Mf_M(z_M)-s_Nf_N(z_N)$$

	Suppose $(x_1^{*}, x_M^{*}, x_N^{*}, \phi^{*})$ is a fixed point for the system \eqref{eq16}-\eqref{eq17}. We have

	$$\phi^{*}(t)=\phi^{*}(t)+\frac{1}{2}\frac{\sigma^2 (s_Mf_M(z^{*}_M)-s_Nf_N(z^{*}_N))}{\phi^{*} s_Mf_M(z^{*}_M)+(1-\phi^{*})s_Nf_N(z^{*}_N)}$$
	which implies that
	$s_Mf_M(z^{*}_M)-s_Nf_N(z^{*}_N)=0$ so 
	$s_Mf_M(z^{*}_M)=s_Nf_N(z^{*}_N)$.
	We also have $$R_0(x^{*}, \phi^{*}) =\phi^{*}(s_Mf_M(z^{*}_M)-s_Nf_N(z^{*}_N))+s_Nf_N(z^{*}_N)=1$$
It follows that  $s_Nf_N(z^{*}_N)=1$ and $s_Mf_M(z^{*}_M)=1$.
 The equation $s_Nf_N(z_N)=1$ has unique solution say  ${\tilde x}_N$ and  $s_Mf_M(z_M)=1$ has two solution ${\tilde x}_M$ and ${\hat x}_M$ with  $f'({\tilde x}_M)>0 $, $f'({\hat x}_M)<0$ and ${\tilde x}_M < {\hat x}_M.$
 Hence $\text{either} \;\:  z^*_M(\phi)= x^*_M(\phi) + p x^*_N(\phi)={\hat x}_M  \;\:\text{or}\;\:   z^*_M(\phi) = {\hat x}_M \textrm{ and }z^*_N(\phi)=x^*_N(\phi) + q x^*_M(\phi)={\tilde x}_N$
	After simplifying and solving for $x_M^*$ and $x_N^*$ in case of $z^*_M(\phi) = {\hat x}_M$  we get
	\begin{eqnarray}
	x^*_M(\phi) &=& \frac{{\hat x_M} - p {\tilde x_N}}{1-pq} \label{formula1} \\
	x^*_N(\phi) &=&\frac{{\tilde x_N} - q {\hat x_M}}{1-pq}  \label{formula2}
	\end{eqnarray}
	it follows that:
	\begin{equation}
	\label{ratio}
	\frac{x^*_M(\phi)}{x^*_N(\phi)}=\frac{{\hat x_M} - p {\tilde x_N}}{{\tilde x_N} - q {\hat x_M}}=\frac{s_M}{s_N}\frac{\phi}{1-\phi}.
	\end{equation}
	
	\begin{equation}  \label{IFD-value}
	\phi^*= \frac{\frac{{\hat x}_M-p{\tilde x}_N}{{\tilde x}_N-q{\hat x}_M}}{\frac{{\hat x}_M-p{\tilde x}_N}{{\tilde x}_N-q{\hat x}_M}+\frac{s_M}{s_N}}.
	\end{equation}
	Similarly in case of $z^*_M(\phi) = {\tilde x}_M$, we have 
	
	\begin{eqnarray}
	x^*_M(\phi) &=& \frac{{\tilde x_M} - p {\tilde x_N}}{1-pq} \label{formula1} \\
	x^*_N(\phi) &=&\frac{{\tilde x_N} - q {\tilde x_M}}{1-pq}  \label{formula2}
	\end{eqnarray}
	
Because $\tilde x_M < C_0< p\tilde x_N$, the corresponding fixed point is not a positive one. So we have one positive fixed point for the system \eqref{eq16}-\eqref{eq17}, we will show that the fixed point $(x^*(\phi^*),\phi^*)$  is  local asymptotically  stable. We linearize the Darwinian system (16)-(17) near fixed point $(x^*(\phi^*),\phi^*)$ yielding the following Jacobian matrix:\\
	$J_D(\sigma^2)=$
	$\begin{pmatrix}
		0& a& b& 0\\
		c& 0& 0& g\\
		d& 0& 0& -h\\
		0& e& f& 1
	\end{pmatrix}$
	where $a=f_M'(z_M^*)x_M^*+f_M(z_M^*)+x_N^*f'_N(z_N^*).q$, $ b= pf_M'(z_M^*)x_M^*+f_N'(z_N^*)x_N^*+f_N(z_N^*)$, 
		$c=	\phi^*s_M$, $ g=s_Mx_1^*$ , $d=	(1-\phi^*)s_N$, $h=s_Nx_1^*$, $e=\frac{\sigma^2}{2}(s_Mf'_M(z_M^*)-qs_Nf'_N(z_N^*))$ and $f=\frac{\sigma^2}{2}(ps_Mf'_M(z_M^*)-s_Nf'_N(z_N^*)).$
	
In the appendix, we have shown that the dominant eigenvalue of the matrix $J_D(\sigma^2)$ is positive and less than one, hence the $\phi^{*}$ given by the equation $\eqref{IFD-value}$ is an unique ESS.

\end{proof}

\section{The ESS is an IFD  in two habitat environment}
Theorem $\ref{pos-sigma}$ revels the existence of an evolutionary stable strategy $\phi^{*} \in (0, 1)$  such that for any population with the fixed strategy $\phi^{*}$, can not be invaded by other population with near by strategies. However, it does not say whether that strategy which lead to stable fixed points that correspond to an ideal free distribution. The IFD has emerged in many studies on the evolution of dispersal \cite{cantrell2020evolution,cantrell2008approximating,cressman2004ideal,gadgil1971dispersal,hastings1983biological,kvrivan2008ideal,mcpeek1992evolution}. Here we study the IFD concept for a  population that exhibits partial migration.
The goal of this Section is to examine this issue. The concept of an IFD is based on the ability of individuals to assess the quality of a spatial  environment, yet model \eqref{sys-1} does not incorporate space explicitly. However, it does contain space implicitly, and we can re-interpret model \eqref{sys-1} as a population model that evolves in two habitats $h_M$ and $h_N$, representing a migratory and non-migratory habitat. To each habitat we can associate a pay-off or fitness function which depends on the density of the abundances in that habitat, at equilibrium state. A population at equilibrium is considered to be ideal free, if the finesses in both habitats are equal and maximal. In this scenario, individuals would have no incentive to move to a different habitat. In this section, we will show that  the ESS is in fact an IFD, corresponding to a non-extreme value of the strategy parameter $\phi^{*}$. But first, we give a mathematical definition of the IFD concept.
\begin{definition}
Recall from Theorem $\ref{pop}$ that for each $\phi \in [0, 1]$, there is  one positive equilibrium state $(x^*_1(\phi), x_M^*(\phi), x_N^*(\phi))$ which is locally asymptotically stable for system \eqref{sys-1}. We define the \textbf{payoff (or fitness)} function for habitat $h_I$ to be $R_0^{I}(z^{*}(\phi^{*}))=s_I f_I (z_I^{*})$, where $I=M, N$ , $z^*_M(\phi)= x^*_M(\phi) + p x^*_N(\phi)$ and  $z^*_N(\phi)= x^*_N(\phi) + q x^*_M(\phi)$. 
 An allocation strategy  $\phi^*$  is said to be an \textbf{IFD} strategy for model $\eqref{sys-1}$ if the two habitats $h_M$ and $h_N$  have the same payoffs  at $\phi^{*}$,  i.e if 
$$R_0^M(z^*_M(\phi^*)) =R_0^N(z^*_N(\phi^*)),$$ 
and if no other strategy satisfying this payoff equality condition, has a higher payoff.
\end{definition}

\begin{theorem}
\label{ifd-thm}
If the conditions of Theorem \ref{pos-sigma} hold for system (\ref{sys-1}) 
 then the ESS  strategy $\phi^*$ given by the formula (\ref{ESS-value}) is the only IFD.
\end{theorem}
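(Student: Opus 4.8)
The plan is to prove the stronger fact that $\phi^*$ is the \emph{unique} strategy for which the two habitat payoffs coincide; once this is in hand, the maximality clause of the IFD definition becomes vacuous and both assertions of the theorem follow at once. The pivot is the normalization that every stable positive equilibrium inherits from Theorem~\ref{pop}: for each $\phi\in[0,1]$ the equilibrium $x^*(\phi)$ satisfies $R_0(x^*(\phi),\phi)=1$, i.e.
$$
\phi\,R_0^M(z_M^*(\phi))+(1-\phi)\,R_0^N(z_N^*(\phi))=1 .
$$
First I would observe that if the two payoffs are equal at some $\phi$, their common value $r$ satisfies $\phi r+(1-\phi)r=r=1$; hence the payoff-equality condition is \emph{equivalent} to the pair of scalar equations $s_Mf_M(z_M^*(\phi))=1$ and $s_Nf_N(z_N^*(\phi))=1$ holding simultaneously. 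This reduces the problem to exactly the algebra already carried out in Theorem~\ref{pos-sigma}.

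Next I would solve this pair. By (\textbf{H2}) the equation $s_Nf_N(z)=1$ has the unique root ${\tilde x}_N$, forcing $z_N^*(\phi)={\tilde x}_N$, while by (\textbf{H1}) the equation $s_Mf_M(z)=1$ has the two roots ${\tilde x}_M<C_0<{\hat x}_M$, so $z_M^*(\phi)\in\{{\tilde x}_M,{\hat x}_M\}$. The branch $z_M^*={\tilde x}_M$ is discarded exactly as in Theorem~\ref{pos-sigma}: since ${\tilde x}_M<C_0<p{\tilde x}_N$, the corresponding migrant abundance $({\tilde x}_M-p{\tilde x}_N)/(1-pq)$ is negative. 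On the surviving branch $z_M^*={\hat x}_M$, solving the linear system $x_M^*+px_N^*={\hat x}_M$, $x_N^*+qx_M^*={\tilde x}_N$ gives a unique pair $(x_M^*,x_N^*)$, positive thanks to the hypotheses $p{\tilde x}_N<{\hat x}_M$ and $q{\hat x}_M<{\tilde x}_N$. Using $x_M^*=\phi s_Mx_1^*$ and $x_N^*=(1-\phi)s_Nx_1^*$, the ratio in \eqref{ratio} then pins down $\phi$ uniquely, recovering the value \eqref{ESS-value}.

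To finish, I would rule out the endpoints. At $\phi=0$ the equilibrium is $({\tilde x}_1,0,{\tilde x}_N)$, so $z_M^*=p{\tilde x}_N$; because $C_0<p{\tilde x}_N<{\hat x}_M$ and $f_M$ decreases on $(C_0,\infty)$ with $s_Mf_M({\hat x}_M)=1$, one gets $R_0^M=s_Mf_M(p{\tilde x}_N)>1=R_0^N$, so the payoffs differ. The case $\phi=1$ is symmetric: at the stable equilibrium $({\hat x}_1,{\hat x}_M,0)$ one has $z_N^*=q{\hat x}_M<{\tilde x}_N$, whence $R_0^N=s_Nf_N(q{\hat x}_M)>1=R_0^M$ by the monotonicity in (\textbf{H2}). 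Thus the set of equal-payoff strategies is the singleton $\{\phi^*\}$ with $\phi^*\in(0,1)$, so the ``no higher payoff'' requirement is trivially satisfied and $\phi^*$ is the only IFD; since this value is exactly the ESS of Theorem~\ref{pos-sigma}, the theorem follows. I expect the only genuine obstacle to be the clean elimination of the ${\tilde x}_M$-branch and of the two endpoints, each of which reduces to the carrying-capacity inequalities $p{\tilde x}_N<{\hat x}_M$, $q{\hat x}_M<{\tilde x}_N$ together with the unimodality of $f_M$, so that no estimate beyond those already in Theorem~\ref{pos-sigma} is required.
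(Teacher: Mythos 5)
Your proposal is correct and follows essentially the same route as the paper: both reduce payoff equality at a positive equilibrium to the pair $s_Mf_M(z_M^*)=s_Nf_N(z_N^*)=1$ via the normalization $R_0(x^*(\phi),\phi)=1$, eliminate the ${\tilde x}_M$-branch using ${\tilde x}_M<C_0<p{\tilde x}_N$, and recover $\phi^*$ from the ratio \eqref{ratio}, exactly as in the computations of Theorem \ref{pos-sigma}. If anything, your uniqueness argument is more explicit than the paper's, which disposes of the converse direction in a single line and does not spell out why the common payoff value must equal one or why the endpoint strategies $\phi=0$ and $\phi=1$ are excluded.
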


\begin{proof}
	By formula (\ref{ESS-value}), we have the following;
 $$\frac{{\hat x_M} - p {\tilde x_N}}{{\tilde x_N} - q {\hat x_M}}=\frac{s_M}{s_N}\frac{\phi^*}{1-\phi^*}.$$
 
 Choose {\Large$\tilde x_1=  \frac{{\hat x_M} - p {\tilde x_N}}{(1-pq)\phi s_M} = \frac{{\tilde x_N} - q {\hat x_M}}{(1-pq) (1-\phi) s_N}$}
 and $ {\tilde x}= ({\tilde x_1}, \frac{{\hat x_M} - p {\tilde x_N}}{1-pq} , \frac{{\tilde x_N} - q {\hat x_M}}{1-pq})$. 
It is easy to check that 
  $R_0({\tilde x}, \phi^*) = 1$ so, clearly {\Large$ {\tilde x}$ } is an equilibrium point of the system (\ref{sys-1}) which is positive. One can show that it is in fact locally asymptotically stable. So we have  

$$
x^*_M(\phi^*)= \frac{{\hat x_M} - p {\tilde x_N}}{1-pq} \textrm{ and }x^*_N(\phi^*)= \frac{{\tilde x_N} - q {\hat x_M}}{1-pq}
$$
Hence the pay-off functions are found to be equal:
$$R_0^M(z^*_M(\phi^*))= R_0^M({\hat x_M})=1=R_0^N({\tilde x_N})=R_0^N(z^*_N(\phi^*))$$ and this proves that  $\phi^*$ is indeed an IFD.

Conversely suppose $\phi=\phi^{*}$ is an IFD. Then $$R_0^M(z^*_M(\phi^*))=R_0^N(z^*_N(\phi^*))$$
By the theorem (\ref{pos-sigma}), $\phi^{*}$ is an ESS.
 \end{proof}
\section{discussion}

 The goals of this study were twofold: investigating if the evolution of partial migration is affected when only the migrants experience Allee effects and using the evolutionary game theory approach to determine evolutionary stable strategies if any.   In existing literature  \cite{griswold2011equilibrium,de2017puzzle,lundberg2013evolutionary,ohms2019evolutionary}, it is shown that negative density dependence in the fertilities alone can explain the partial migration phenomenon, provided it is attenuated with increasing sub-type abundances. The works are mostly numerical. Nevertheless, several biological features specially Allee effects of partially migrating populations have been neglected. An Allee effect is a positive association between absolute average individual fitness and population size over some finite interval.  In some partially migrating population, the migrant individual  experience predation during their stays at migrant habitat which may result in Allee effects. When the size of populations subject to  Allee effects is low, then these populations tend towards extinction. This fact argues for a thorough understanding of Allee effects and their mechanisms in order to develop sound management practices for a number of environmental issues. So the important aspect of this work is that, we have obtained an analytical formula for the Evolutionary Stable Strategy (ESS) for the allocation strategy adopted by a partially migrating population with migrant population experiencing a strong Allee effects. This ESS is expressed in terms of the demographic model parameters for the migrant and non-migrant populations, and thus the formula can be used to predict the ESS value, whenever the life histories of the migrant and non-migrant populations are known, for example from lab or empirical data.  More ever, the conditions on environmental parameters like carrying capacities under which the population will be evolutionary stable are clearly stated. Our results differ from previous studies  \cite{griswold2011equilibrium,de2017puzzle,lundberg2013evolutionary,ohms2019evolutionary} in that partial migration as it point to more of a system
of thresholds, which provides some insight into how populations may respond to future conditions when migrant undergoes Allee effects. We also investigate the connection between spatio/temporal structure and the evolution of partial migration behavior by associating the ESS to Ideal Free distribution (IFD) which happens to be a powerful tool for understanding  how populations distribute themselves in heterogeneous environments.  Mostly the phenomenon is studied in temporally constant environment but many environments are seasonal and changes with time. So for the future work it is motivating  to investigate the phenomenon when the underlying population model is modified to incorporate time varying environments.
\section{Acknowledgment}
This work is funded by start up research grant, SERB (SRG/2019/002200) to AM. 

\section{Appendix}

\textbf{The dominant eigen value of the Jacobian of Darwinian system:}\\

The Jacobian matrix with $\sigma^2 > 0$ is given by
$$
J_D(\sigma^2)=\begin{pmatrix}
C(x^*(\phi^*),\phi^*)&{\bf *}\\
0&1
\end{pmatrix}+
\sigma^2
\begin{pmatrix}
0&0&0&0\\
0&0&0&0\\
0&0&0&0\\
0&\frac {(s_Mf'_M(z_M^*)-qs_Nf'_N(z_N^*))}{2}&\frac{(ps_Mf'_M(z_M^*)-s_Nf'_N(z_N^*))}{2}&0
\end{pmatrix},
$$
where the ${\bf *}$ represents a $3$-dimensional vector whose value is unimportant at present, and the $3\times 3$  matrix $C(x^*(\phi^*),\phi^*)$ is defined as:
$$
C(x^*(\phi^*),\phi^*)=\begin{pmatrix}
0 & a & b\\
c&0&0\\
d&0&0
\end{pmatrix}, 
$$
with $a=f_M'(z_M^*)x_M^*+f_M(z_M^*)+x_N^*f'_N(z_N^*).q, b= pf_M'(z_M^*)x_M^*+f_N(z_M^*)+x_N^*f'_N(z_N^*), c=\phi^{*}s_M $ and $d= (1-\phi^*)s_N$. One can check that the eigen values of the matrix $C(x^*(\phi^*),\phi^*)$ are $0$ and $\pm \sqrt{1+ x_1^{*} Q^{'}(x_1^{*})}$. We have proved in the theorem (2.10) that the dominant eigen value  $ |\sqrt{1+ x_1^{*} Q^{'}(x_1^{*})}|$ of  the matrix $C(x^*(\phi^*),\phi^*)$ is positive and less than one.
Consequently, the dominant eigenvalue of $J_D(0)$ is $1$, and this eigenvalue is simple. By continuity of eigenvalues, the matrix $J_D(\sigma^2)$ will also have a real, simple and dominant eigenvalue $\lambda_p(\sigma^2)$ for all sufficiently small $\sigma^2$, such that $\lambda_p(0)=1$. We claim that $\lambda_p(\sigma^2)<1$, at least for all sufficiently small $\sigma^2$. To prove this, we now examine the roots of the characteristic polynomial $F(\lambda,\sigma^2):=\det (\lambda I -J_D(\sigma^2))$ associated to the matrix $J_D(\sigma^2)$. 
\begin{equation}\label{constants}
J_D(\sigma^2)=\begin{pmatrix}
0&a&b&0\\
c&0&0&g\\
d&0&0&-h\\
0&\sigma^2 e& \sigma^2 f& 1
\end{pmatrix}
\end{equation}
Where  $ g=s_Mx_1^*$ ,  $ h=s_Nx_1^*$, $e=\frac{\sigma^2}{2}(s_Mf'_M(z_M^*)-qs_Nf'_N(z_N^*))$ and $f=\frac{\sigma^2}{2}(ps_Mf'_M(z_M^*)-s_Nf'_N(z_N^*)).$
A tedious calculation shows that the characteristic polynomial of the matrix $J_D(\sigma^2)$ is given by:
$$
F(\lambda,\sigma^2)=\lambda^4-\lambda^3-\left[\sigma^2(eg-fh)+(ac+bd) \right]\lambda^2+(ac+bd)\lambda+\sigma^2(be-af)(ch+dg)
$$
Note that $F(\lambda,0)$ is positive for all $\lambda>1=\lambda_p(0)$ (since $\lambda_p(0)=1$ is the {\it dominant} root of $F(\lambda,0)$, and $\lim_{\lambda\to \infty}F(\lambda,0)=+\infty$). Moreover, $\partial F/\partial \lambda (\lambda_p(0),0)$ must be positive because $\lambda_p(0)=1$ is a {\it simple} root of $F(\lambda,0)$ (this can also be shown directly by calculating this partial derivative using the expression above: 
$\partial F/\partial \lambda (\lambda_p(0),0)=1-(ac+bd)$, and this is positive). 
Therefore, the claim above (namely, that $\lambda_p(\sigma^2)<1$, for all sufficiently small $\sigma^2$) will be proved, provided we can show that for all sufficiently small $\sigma^2$, there holds that:
\begin{equation}\label{stable}
F(1,\sigma^2)>0.
\end{equation}
Evaluating $F(1,\sigma^2)$ yields:
$$
F(1,\sigma^2)=\sigma^2 \left((be-af)(ch+dg)-(eg-fh) \right).
$$
Therefore, a sufficient condition for $(\ref{stable})$ to hold, is that:
$$
(fh-eg) - ((af-be)(ch+dg))>0 
$$
After simplifying we found that 
$$
(fh-eg) - ((af-be)(ch+dg)) = f_M'(z^*_M)f'_N(z^*_N)s_Ms_N(1-pq)(x^*_Ns_M+x^*_Ms_N) >0
$$
 This concludes that the dominant eigen value of the Jacobian matrix corresponding to the Darwinan system  \eqref{eq16}-\eqref{eq17} is less than one in modulus.


\begin{thebibliography}{99}
\bibitem{allen2008basic}
Allen, L. J., \& Van den Driessche, P. (2008). The basic reproduction number in some discrete-time epidemic models. {\em Journal of difference equations and applications}, 14(10-11), 1127-1147.
\bibitem{boukal2002single}
Boukal, D. S., \& Berec, L. (2002). Single-species models of the Allee effect: extinction boundaries, sex ratios and mate encounters. {\em Journal of Theoretical Biology}, 218(3), 375-394.
\bibitem{cantrell2020evolution}
Cantrell, R. S., Cosner, C., Lewis, M. A., \& Lou, Y. (2020). Evolution of dispersal in spatial population models with multiple timescales. {\em Journal of mathematical biology}, 80(1), 3-37.
\bibitem{cantrell2008approximating}
Cantrell, R. S., Cosner, C., \& Lou, Y. (2008). Approximating the ideal free distribution via reaction–diffusion–advection equations. {\em Journal of Differential Equations}, 245(12), 3687-3703.
\bibitem{caswell2000matrix}
Caswell, H. (2000). Matrix population models (Vol. 1). {\em Sunderland, MA: Sinauer.}
\bibitem{chapman2011ecology}
Chapman, B. B., Brönmark, C., Nilsson, J. Å., \& Hansson, L. A. (2011). The ecology and evolution of partial migration. {\em Oikos}, 120(12), 1764-1775.
\bibitem{courchamp1999inverse}
Courchamp, F., Clutton-Brock, T., \& Grenfell, B. (1999). Inverse density dependence and the Allee effect. {\em Trends in ecology \& evolution}, 14(10), 405-410.

\bibitem{cressman2004ideal}
Cressman, R., Křivan, V., \& Garay, J. (2004). Ideal free distributions, evolutionary games, and population dynamics in multiple-species environments. {\em The American Naturalist}, 164(4), 473-489.

\bibitem{cushing1998introduction}
Cushing, J. M. (1998). An introduction to structured population dynamics. {\em Society for industrial and applied mathematics}.
\bibitem{cushing2014backward}
Cushing, J. M. (2014). Backward bifurcations and strong Allee effects in matrix models for the dynamics of structured populations. {\em Journal of biological dynamics}, 8(1), 57-73.
\bibitem{cushing2012evolutionary}
Cushing, J. M., \& Hudson, J. T. (2012). Evolutionary dynamics and strong Allee effects. {\em Journal of biological dynamics}, 6(2), 941-958.

\bibitem{edelstein2005mathematical}
Edelstein-Keshet, L. (2005). Mathematical models in biology. {\em SIAM}, Philadelphia. PA. Zbl1100, 92001.
\bibitem{gadgil1971dispersal}
Gadgil, M. (1971). Dispersal: population consequences and evolution. {\em Ecology,} 52(2), 253-261.
\bibitem{griswold2011equilibrium}
Griswold, C. K., Taylor, C. M., \& Norris, D. R. (2011). The equilibrium population size of a partially migratory population and its response to environmental change. {\em Oikos}, 120(12), 1847-1859.
\bibitem{hastings1983biological}
Hastings, J. W. (1983). Biological diversity, chemical mechanisms, and the evolutionary origins of bioluminescent systems. {\em Journal of molecular evolution}, 19(5), 309-321.
\bibitem{courchamp2008allee}
Courchamp, F., Berec, L., \& Gascoigne, J. (2008). Allee effects in ecology and conservation. {\em OUP Oxford.}
\bibitem{kokko2007modelling}
Kokko, H. (2007). Modelling for field biologists and other interesting people. {\em Cambridge University Press.}
\bibitem{kvrivan2008ideal}
Křivan, V., Cressman, R., \& Schneider, C. (2008). The ideal free distribution: a review and synthesis of the game-theoretic perspective. {\em Theoretical population biology}, 73(3), 403-425.
\bibitem{lande1976natural}
Lande, R. (1976). Natural selection and random genetic drift in phenotypic evolution. {\em Evolution}, 314-334.
\bibitem{lande1982quantitative}
Lande, R. (1982). A quantitative genetic theory of life history evolution. {\em Ecology}, 63(3), 607-615.
\bibitem{de2017puzzle}
De Leenheer, P., Mohapatra, A., Ohms, H. A., Lytle, D. A., \& Cushing, J. M. (2017). The puzzle of partial migration: adaptive dynamics and evolutionary game theory perspectives. {\em Journal of Theoretical Biology}, 412, 172-185.
\bibitem{li2002applications}
Li, C. K., \& Schneider, H. (2002). Applications of Perron–Frobenius theory to population dynamics. {\em Journal of mathematical biology}, 44(5), 450-462.
\bibitem{lundberg1988evolution}
Lundberg, P. (1988). The evolution of partial migration in birds. {\em Trends in ecology \& evolution}, 3(7), 172-175.
\bibitem{lundberg2013evolutionary}
Lundberg, P. (2013). On the evolutionary stability of partial migration.{\em Journal of Theoretical Biology}, 321, 36-39.
\bibitem{mcpeek1992evolution}
McPeek, M. A., \& Holt, R. D. (1992). The evolution of dispersal in spatially and temporally varying environments. {\em The American Naturalist}, 140(6), 1010-1027.
\bibitem{mohapatra2016population}
Mohapatra, A., Ohms, H. A., Lytle, D. A., \& De Leenheer, P. (2016). Population models with partial migration. {\em Journal of Difference Equations and Applications}, 22(2), 316-329.
\bibitem{allen2013adaptive}
Allen, B., Nowak, M. A., \& Dieckmann, U. (2013). Adaptive dynamics with interaction structure. {\em The American Naturalist}, 181(6), E139-E163.
\bibitem{sloat_individual_2014}
Sloat, Matthew R. and Reeves, Gordon H. and Jonsson, Bror.(2014). Individual condition, standard metabolic rate, and rearing temperature influence steelhead and rainbow trout ( \textit{{Oncorhynchus} mykiss} ) life histories. {\em Canadian Journal of Fisheries and Aquatic Sciences} 71, 491-501
\bibitem{ohms2019evolutionary}
Ohms, H. A., Mohapatra, A., Lytle, D. A., \& De Leenheer, P. (2019). The evolutionary stability of partial migration under different forms of competition.{\em Theoretical Ecology}, 12(3), 347-363.	
\bibitem{perez2014males}
Pérez, C., Granadeiro, J. P., Dias, M. P., Alonso, H., \& Catry, P. (2014). When males are more inclined to stay at home: insights into the partial migration of a pelagic seabird provided by geolocators and isotopes. {\em Behavioral Ecology}, 25(2), 313-319.
\bibitem{schindler2010population}
Schindler, D. E., Hilborn, R., Chasco, B., Boatright, C. P., Quinn, T. P., Rogers, L. A., \& Webster, M. S. (2010). Population diversity and the portfolio effect in an exploited species,{\em Nature}, 465(7298), 609-612.
\bibitem{vincent2005evolutionary}
Vincent, T. L., \& Brown, J. S. (2005).Evolutionary game theory, natural selection, and Darwinian dynamics,{\em Cambridge University Press}.
\end{thebibliography}
\end{document}